\newenvironment{proof}{\noindent\textit{Proof: }}{$\Box $}
\newcommand{\BQP}{{\sffamily BQP}}
\newcommand{\Cl}[1]{Claim~\ref{#1}}
\newcommand{\EqDef}{\stackrel{\mathrm{def}}{=}}
\newcommand{\bra}[1]{\left< #1\right|}
\newcommand{\ket}[1]{\left| #1\right>}
\newcommand{\tr}{\mbox{Tr}}
\newcommand{\mZ}{\mathbbm{Z}}
\newcommand{\mN}{\mathbbm{N}}
\newcommand{\mR}{\mathbbm{R}}
\newcommand{\mbP}{\mathbbm{P}}
\newcommand{\mcI}{\mathcal{I}}
\newcommand{\mfC}{\mathfrak{C}}
\newcommand{\ignore}[1]{}
\newtheorem{thm}{Theorem}[section]
\newtheorem{deff}[thm]{Definition}
\newtheorem{claim}[thm]{Claim}
\newtheorem{lem}[thm]{Lemma}
\newtheorem{scheme}{Scheme}[section]
\newenvironment{proofof}[1]{\noindent{\textit Proof} of $\mathbf{#1}$:\hspace*{1em}}{$\Box $}
\newenvironment{statement}[1]{\noindent{\textbf {#1}\hspace*{0.5em}}}{$\\$}
\def\hpic #1 #2 {\mbox{$\begin{array}[c]{l}
      \epsfig{file=#1,height=#2} \end{array}$}}
\def\vpic #1 #2 {\mbox{$\begin{array}[c]{l}
      \epsfig{file=#1,width=#2} \end{array}$}}
\newcommand{\trnq}[1]{\left[ {#1} \right]_q}
\newcommand{\lwe}{\mathrm{LWE}}
\newcommand{\vc}[1]{\mathbf{{#1}}}
\newcommand{\abs}[1]{\left\vert {#1} \right\vert}
\newcommand{\sivp}{\mathrm{SIVP}}
\newcommand{\otild}{{\widetilde{O}}}
\newcommand{\T}[2]{\left\|#1 - #2\right\|_{tr}}
\def\*#1{\mathbf{#1}}
\newcommand{\GenTrap}{\textsc{GenTrap}}
\newcommand{\Invert}{\textsc{Invert}}
\newcommand{\sR}{{\mathcal{R}}}
\newcommand{\sY}{{\mathcal{Y}}}
\newcommand{\odots}{{\otimes{\ldots}\otimes}}
\newcommand{\cnot}{encrypted CNOT operation}
\newcommand{\capitalcnot}{Encrypted CNOT Operation}
\newcommand{\TV}[2]{\left\|#1 - #2\right\|_{TV}}
\def\*#1{\mathbf{#1}}
\newcommand{\supp}{\textsc{Supp}}
\begin{document}

\author{Urmila Mahadev\footnote{Department of Computer Science, UC Berkeley, USA. Supported by Templeton Foundation Grant 52536, ARO Grant W911NF-12-1-0541, NSF Grant CCF-1410022 and MURI Grant FA9550-18-1-0161. Email: mahadev@cs.berkeley.edu.}}

\title{Classical Homomorphic Encryption for Quantum Circuits}

\maketitle

\begin{abstract}
We present the first leveled fully homomorphic encryption scheme for quantum circuits with classical keys. The scheme allows a classical client to blindly delegate a quantum computation to a quantum server: an honest server is able to run the computation while a malicious server is unable to learn any information about the computation. We show that it is possible to construct such a scheme directly from a quantum secure classical homomorphic encryption scheme with certain properties. Finally, we show that a classical homomorphic encryption scheme with the required properties can be constructed from the learning with errors problem.
\end{abstract}

\section{Introduction}
Can a classical client delegate a desired quantum computation to a remote quantum server while hiding all data from the server? Quantum secure classical encryption schemes do not immediately answer this question; they provide a way of hiding data, but not of computing on the data. This question is particularly relevant to proposals for quantum computing in the cloud. 

The classical analogue of this task, in which the client is a weak classical machine and the server is more powerful (but still classical), was solved in 2009 with the celebrated construction of homomorphic encryption schemes (\cite{homomorphic}). Unfortunately, these schemes are built only to handle classical computations on the encrypted data; the prospect of applying computations in superposition over encrypted bits seems to be much more difficult. This difficulty arises from the fact that all classical homomorphic encryption schemes require (for security) that each bit has many possible different encryptions. This property appears to preclude the quantum property of interference: interference requires that elements of a superposition representing the same bit string, but with opposite amplitudes, cancel out. If these elements have different encryptions, interference cannot happen, thereby preventing one of the key advantages of quantum algorithms.

Due to these obstacles, the question of quantum homomorphic encryption was weakened by allowing a quantum client (\cite{broadbentjeffery2014}). This variant has been well studied in recent years (\cite{broadbentjeffery2014}, \cite{qhomomorphic2}, \cite{qhomomorphic1}, \cite{qhomomorphic3}, \cite{qhomomorphic4}, \cite{speelman2016}) and has led to advancements in the field of delegated quantum computing. However, the model has a number of shortcomings. The principal issue is that the quantum client relies on quantum evaluation keys, which are not reusable; the client must generate fresh keys each time he wishes to delegate a quantum computation. Unfortunately, existing protocols (\cite{speelman2016}) require the client to generate a number of quantum keys proportional to the size of the quantum circuit being applied.\footnote{The client is still restricted in some sense (in comparison to the server); for example, the client may not need to run a general \BQP\ circuit or may only require a constant sized quantum register.} 

The related question of blind quantum computation predated the question of quantum homomorphic encryption and has also been extensively studied, beginning with \cite{childs2001saq}. Blind quantum computation and quantum homomorphic encryption have the same goal of carrying out a computation on encrypted data, but blind computation allows multiple rounds of interaction between the client and server, while homomorphic encryption allows only one round of interaction. Even in the weaker model of blind computation, a quantum client has been a necessity so far (at a minimum, the client must be able to prepare certain constant qubit states \cite{broadbent2008ubq}/\cite{abe2008}). 

In this paper, we return to the original question of quantum homomorphic encryption by providing a homomorphic encryption scheme for quantum computations with a classical client. To do this, we show that certain classical homomorphic encryption schemes can be lifted to the quantum setting; they can be used in a different way to allow for homomorphic evaluation of quantum circuits. It follows that all properties of classical homomorphic encryption schemes (such as reusability of keys and circular security) also hold in the quantum setting. The scheme presented in this paper is the first to allow blind quantum computation between a classical client and a quantum server.\footnote{There have been two papers (\cite{flowpaper}, \cite{kashefipsrqg}) proposing delegated blind quantum computation protocols between a classical client and a quantum server. Both of these results differ from ours in that they do not claim security (i.e. blindness) against a malicious quantum server for the delegation of general quantum computations. Moreover, both results require multiple rounds of interaction between the client and the server.}

To build our homomorphic encryption scheme, we begin with the fact that blindly computing a quantum circuit can be reduced to the ability of a quantum server to perform a CNOT gate (a reversible XOR gate) controlled by classically encrypted data. More specifically, we need a procedure which takes as input the classical encryption of a bit $s$, which we denote as $\textrm{Enc}(s)$, a 2 qubit state $\ket{\psi} = \sum\limits_{a,b\in\{0,1\}}\alpha_{ab}\ket{a,b}$ and outputs the following state:
\begin{equation}\label{eq:introinitialcnot}
    \textrm{CNOT}^s\ket{\psi} = \sum_{a,b\in\{0,1\}}\alpha_{ab}\ket{a,b\oplus a\cdot s}
\end{equation} 
We call this procedure an \cnot. Of course, the output state $\textrm{CNOT}^s\ket{\psi}$ will have to be suitably encrypted (to avoid revealing $s$); we will show in a bit that the Pauli one time pad encryption scheme suffices for this purpose. The key step of the \cnot\ is the extraction of a classically encrypted bit into a quantum superposition. We now show how this extraction can be done by relying on the classical cryptographic primitive of trapdoor claw-free function pairs.

A trapdoor claw-free function pair is a pair of injective functions $f_0,f_1$ which have the same image, are easy to invert with access to a trapdoor, and for which it is computationally difficult to find any pair of preimages $(x_0,x_1)$ with the same image ($f_0(x_0) = f_1(x_1)$). Such a pair of preimages is called a claw, hence the name claw-free. These functions are particularly useful in the quantum setting, due to the fact that a quantum machine can create a uniform superposition over a random claw $(x_0,x_1)$: $\frac{1}{\sqrt{2}}(\ket{x_0} + \ket{x_1})$. This superposition can be used to obtain information which is conjectured to be hard to obtain classically: the quantum machine can obtain \textit{either} a string $d\neq 0$ such that $d\cdot(x_0\oplus x_1) = 0$ or one of the two preimages $x_0,x_1$. In \cite{oneproverrandomness}, this advantage was introduced and used to create a scheme for verifiable quantum supremacy and to show how to generate information theoretic randomness from a single quantum device. This advantage was further used in \cite{oneprover} to classically verify general quantum computations. 

Here, we show that if a bit $s$ is encrypted using a trapdoor claw-free function pair $f_0,f_1:\{0,1\}\times \mathcal{R}\rightarrow \mathcal{Y}$, it can easily be stored in superposition, as required for the \cnot. First, we need to be precise about the encryption: we say that the function pair $f_0,f_1$ encrypts the bit $s$ if each claw $(\mu_0,r_0),(\mu_1,r_1)$ (for $\mu_0,\mu_1\in\{0,1\}, r_0,r_1\in\mathcal{R}$) hides the bit $s$ as follows: $\mu_0\oplus \mu_1 = s$. Before proceeding, we make a simplifying assumption for the purposes of this introduction: we assume that the second qubit of the state $\ket{\psi}$ is fixed to 0 (in other words, $\ket{\psi}$ can be written as $\sum_{a\in\{0,1\}}\alpha_a\ket{a,0}$). This assumption allows us to highlight the key step of the \cnot, which is the extraction of an encrypted bit into a superposition, and allows us to replace \eqref{eq:introinitialcnot} with:
\begin{equation}\label{eq:introsimplecnot}
    \textrm{CNOT}^s\ket{\psi} = \sum_{a\in\{0,1\}}\alpha_{a}\ket{a,a\cdot s}
\end{equation} 
Given the special form of the trapdoor claw-free encryption, conversion to a superposition is quite straightforward. Using the encryption $f_0,f_1$, the prover can entangle the state $\ket{\psi}$ with a random claw, creating the following state:
\begin{equation}\label{eq:introfinalcnot}
    \sum_{a\in\{0,1\}}\alpha_{a}\ket{a}\ket{\mu_a,r_a} = \sum_{a\in\{0,1\}}\alpha_{a}\ket{a}\ket{\mu_0\oplus a\cdot s,r_a}
\end{equation}
Observe that the bit $s$ is now stored in superposition, although it is hidden (due to $\mu_0,r_0,r_1$). As mentioned earlier, the output of the \cnot\ must be suitably encrypted, and $\mu_0,r_0,r_1$ will eventually be part of this encryption, as we will see shortly. To briefly give some intuition about these parameters, note that $\mu_0$ serves as a classical one time pad to hide the bit $s$, and $r_0,r_1$ are required since the bit $s$ is in superposition; therefore, just a classical one time pad does not suffice. 

We now return to the encryption scheme we will use: as in previous blind quantum computation schemes, our homomorphic encryption scheme relies on the Pauli one time pad encryption scheme for quantum states (\cite{pauliotp}), which is the quantum analogue of the classical one time pad. Recall the one time pad method of classical encryption: to encrypt a string $m$, it is XORed with a random string $r$. Just as $l$ classical bits suffice to hide an $l$ bit string $m$, $2l$ classical bits suffice to hide an $l$ bit quantum state $\ket{\psi}$. The Pauli one time pad requires the Pauli $X$ and $Z$ operators, defined as follows:
\begin{equation}
X = \left(\begin{array}{ll}
0&1\\1&0\end{array}\right), Z=\left(\begin{array}{ll}
1&0\\0&-1\end{array}\right)
\end{equation}
A single qubit state $\ket{\psi}$ is information theoretically encrypted by choosing bits $z,x\in\{0,1\}$ (called the Pauli keys) at random and applying the Pauli one time pad $Z^zX^x$, creating the following encryption: 
\begin{equation}\label{eq:intropaulipad}
    Z^zX^x\ket{\psi}
\end{equation}
To convert this encryption to a computationally secure encryption, the encrypted Pauli keys are included as part of the encryption. The resulting encryption is two part, consisting of the quantum state in \eqref{eq:intropaulipad} and classically encrypted bits $\textrm{Enc}(z),\textrm{Enc}(x)$). 

Given the Pauli one time pad, we can now precisely state the goal of our \cnot\ described above. The \cnot\ is a quantum operation which takes as input $\textrm{Enc}(s)$ and a 2 qubit state $\ket{\psi}$ and outputs the state $(Z^z\otimes X^x)\textrm{CNOT}^s\ket{\psi}$ as well as classical encryptions of $z,x\in\{0,1\}$.\footnote{Each application of the \cnot\ results in $z,x\in\{0,1\}^2$ sampled uniformly at random.} Recall the final state of the \cnot\ (as given in \eqref{eq:introfinalcnot}):
\begin{equation}
  \sum_{a\in\{0,1\}}\alpha_{a}\ket{a}\ket{\mu_0\oplus a\cdot s,r_a} =  (\mcI\otimes X^{\mu_0}\otimes\mcI)\sum_{a\in\{0,1\}}\alpha_{a}\ket{a}\ket{a\cdot s,r_a}
\end{equation}
The result of performing a Hadamard measurement on the final register (containing $r_a$) to obtain a string $d$ is:
\begin{equation}
    (Z^{d\cdot(r_0\oplus r_1)}\otimes X^{\mu_0})\sum_{a\in\{0,1\}}\alpha_{a}\ket{a}\ket{a\cdot s}
\end{equation}
To complete the operation, the server must compute the encryptions of $d\cdot(r_0\oplus r_1)$ and $\mu_0$. This can be done via a classical homomorphic computation as long as the server is given the encryption of the trapdoor of the function pair $f_0,f_1$.

So far, we have shown that if $\textrm{Enc}(s)$ is equal to a pair of trapdoor claw-free functions $f_0,f_1$ (for which each claw hides the bit $s$), the \cnot\ is straightforward. However, in order to apply quantum operations homomorphically to the Pauli one time pad encryption scheme described above, we require that $\textrm{Enc}(s)$ is an encryption under a classical homomorphic encryption scheme. Therefore, to obtain a homomorphic encryption scheme for quantum circuits, the above idea must be generalized: we must show how to perform the \cnot\ if $\textrm{Enc}(s)$ is a ciphertext of a classical homomorphic encryption scheme which may not have the trapdoor claw-free structure described in the previous paragraph. We do so by showing that if the classical encryption scheme satisfies certain properties, the function pair $f_0,f_1$ can be constructed given $\textrm{Enc}(s)$. We call such classical encryption schemes \textit{quantum-capable}. The two main results of this paper are that a quantum-capable scheme can be constructed from the learning with errors problem by combining two existing classical encryption schemes (Theorem \ref{thm:dualhequantumcapable}) and that quantum-capable schemes can be used for homomorphic evaluation of quantum circuits (Theorem \ref{thm:qhehomomorphic}). Combined, they provide the following theorem:
\begin{thm}[Informal]\label{thm:mainresult}
Under the assumption that the learning with errors problem with superpolynomial noise ratio is computationally intractable for an efficient quantum machine, there exists a quantum leveled fully homomorphic encryption scheme with classical keys.
\end{thm}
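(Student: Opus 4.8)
The plan is to obtain \Th{thm:mainresult} as a direct composition of the two main technical results quoted above. First I would invoke \Th{thm:dualhequantumcapable} to fix, under the stated LWE assumption with superpolynomial noise ratio, a concrete classical homomorphic encryption scheme that is \emph{quantum capable}. Since quantum capability is exactly the interface consumed by \Th{thm:qhehomomorphic}, feeding this scheme into that theorem yields a scheme for homomorphic evaluation of quantum circuits. The remaining work is bookkeeping: checking that the three qualifiers in the statement --- \emph{leveled fully homomorphic}, \emph{quantum}, and \emph{with classical keys} --- all survive the composition, and that the correctness and security reductions chain together with only negligible loss.

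For the structural properties I would argue as follows. The keys of the resulting quantum scheme are precisely the keys of the underlying quantum-capable classical scheme: the quantum data is protected by a Pauli one time pad whose keys are themselves classically encrypted, and the only auxiliary information handed to the server is a classical encryption of the trapdoor. Hence the keys remain classical. Full (quantum) homomorphism follows from universality of the Clifford-plus-$T$ gate set together with the extraction mechanism of \Eq{eq:introsimplecnot}: Clifford gates are handled by homomorphically updating the encrypted Pauli keys, while each $T$ gate introduces a Pauli correction controlled by an encrypted key bit, which is applied using the \cnot. The qualifier \emph{leveled} is what ties the depth of the target quantum circuit to the parameters chosen at key generation: each \cnot\ and each key update is a bounded classical computation performed homomorphically, so evaluating a depth-$d$ quantum circuit requires the classical scheme to support classical circuits of depth $\poly(d)$. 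Choosing this depth budget at setup yields the leveled guarantee, and correctness holds because the superpolynomial noise ratio keeps the accumulated LWE error below the decryption threshold.

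Security --- blindness against a malicious quantum server --- I would reduce to the quantum semantic security of the classical scheme, which \Th{thm:dualhequantumcapable} already reduces to the quantum intractability of LWE. The key observation is that the server's entire view consists of a Pauli-one-time-padded quantum state together with classical ciphertexts of the Pauli keys and the trapdoor; conditioned on those ciphertexts being indistinguishable from encryptions of independent uniform strings, the one time pad renders the quantum register information-theoretically independent of the underlying computation. A standard hybrid argument then turns any adversary that breaks blindness into a distinguisher for the classical scheme, and hence into an LWE solver.

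I expect the main obstacle to lie not in the composition itself but in verifying that the scheme produced by \Th{thm:dualhequantumcapable} genuinely satisfies \emph{every} hypothesis demanded by \Th{thm:qhehomomorphic}. In particular, that construction glues together two classical schemes, and one must confirm that the glued object still supports homomorphic evaluation of the (now larger) classical circuits needed for the Pauli key updates and the \cnot, and that its noise budget is simultaneously large enough for correctness of those updates yet small enough that the induced claw-free structure remains hard to invert. Identifying the precise superpolynomial noise ratio that reconciles these competing demands is the quantitative heart of the argument; everything else is a routine chaining of the two theorems with negligible error accumulation.
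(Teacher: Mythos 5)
Your proposal matches the paper's own proof: Theorem \ref{thm:mainresult} is obtained precisely by composing Theorem \ref{thm:dualhequantumcapable} (DualHE is quantum capable under LWE with superpolynomial noise ratio) with Theorem \ref{thm:qhehomomorphic} (any quantum capable scheme yields the quantum leveled FHE scheme of Scheme \ref{sec:schemedescription}), and your bookkeeping of classical keys, the leveled depth budget, and the hybrid security reduction mirrors Theorem \ref{thm:quantcpa} and Claim \ref{cl:leveledsumcorrectness}. One terminological slip worth fixing: the paper's universal set is Clifford plus Toffoli, and conjugating the Pauli pad through the Toffoli produces a \emph{Clifford} correction $C_{zx}$ (key-dependent CNOT gates), not a Pauli correction --- the fact that this correction is not Pauli, and so cannot be absorbed by a key update alone, is exactly why the \cnot\ is needed, as you otherwise correctly use it.
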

We note that follow up work \cite{BrakerskiQFHE} has improved the underlying assumption to the learning with errors problem with polynomial noise ratio.


\section{Overview}\label{sec:overviewhomomorphic}
We now present an overview of the paper, which proceeds as follows. We first describe the Pauli one time pad encryption scheme and show how quantum gates can be applied homomorphically, with the goal of reducing quantum homomorphic encryption to the \cnot\ described in the introduction. We then describe how the \cnot\ works, in the case that $\textrm{Enc}(s)$ is equal to a trapdoor claw-free function pair $f_0,f_1$ which hides the encrypted bit $s$. Next, we show how a classical encryption of a bit $s$ can be used to build $f_0,f_1$ if the encryption scheme has certain properties and we use these properties to describe how such quantum-capable homomorphic encryption schemes are defined. We conclude by describing how to combine two existing classical homomorphic encryption schemes (\cite{gentry2008}, \cite{fhelwe}) to form a quantum-capable scheme, and then showing how a quantum-capable scheme can be used to build a quantum leveled fully homomorphic encryption scheme with classical keys. The paper itself follows the outline of this overview and provides full proofs of all of our results.

\subsection{Reduction to the Encrypted CNOT Operation}\label{sec:paulipad}

An $l$ qubit quantum state $\ket{\psi}$ is Pauli one time padded by choosing $z,x\in\{0,1\}^l$ at random and applying $Z^zX^x$ to $\ket{\psi}$, creating $Z^zX^x\ket{\psi}$. The bit strings $z,x$ are called the $\textit{Pauli keys}$ and are retained by the client. Once the client sends the encrypted state to the server, the shared state held by the client and server is (the last register containing $zx$ is held by the client):
\begin{equation}
\frac{1}{2^{2l}}\sum_{z,x\in\{0,1\}^l} Z^zX^x\ket{\psi}\bra{\psi} (Z^zX^x)^\dagger\otimes\ket{zx}\bra{zx}
\end{equation}
The client's decoding process is simple: he simply uses the keys $z,x$ to apply the Pauli operator $(Z^zX^x)^\dagger$ to the state he receives from the server. A nice property is that, in the case that $\ket{\psi}$ is a standard basis state $\ket{m}$, the quantum one time pad is the same as the classical one time pad:
\begin{equation}
    Z^zX^x\ket{m}\bra{m}X^xZ^z = \ket{m\oplus x}\bra{m\oplus x}
\end{equation}
It follows that the encoding and decoding of a standard basis state can be performed classically.

The key property of the Pauli one time pad used in blind computing is the fact that it can be used to hide a quantum state entirely: to the server, who has no knowledge of the Pauli keys, a Pauli one time padded quantum state is equal to the maximally mixed state (the identity $\mcI$), as stated in the following lemma (see Section \ref{sec:paulimixingproof} for the proof):
\begin{lem}[\bfseries Pauli Mixing]\label{paulimix}
For a matrix $\rho$ on two spaces $A,B$
$$
\frac{1}{2^{2l}}\sum_{z,x\in\{0,1\}^l} (Z^zX^x\otimes\mcI_B) \rho (Z^zX^x\otimes\mcI_B)^\dagger = \frac{1}{2^l}\mcI_A\otimes\tr_A(\rho)
$$
\end{lem}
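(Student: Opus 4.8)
The plan is to prove the Pauli Mixing lemma by showing that the averaging operation, which we can call the \emph{twirling channel}, acts as a completely depolarizing channel on the $A$ system while leaving $B$ untouched. The cleanest route is to exploit the fact that the set of Pauli operators $\{Z^zX^x : z,x\in\{0,1\}^l\}$ forms an orthogonal basis for the space of $2^l\times 2^l$ matrices under the Hilbert--Schmidt inner product, so I would first expand the arbitrary matrix $\rho$ in this basis. Writing $\rho = \sum_{w,v,w',v'} c_{w,v,w',v'}\, (Z^{w}X^{v})_A \otimes M_{w',v'}$, where the $M_{w',v'}$ are operators on $B$, reduces the problem to computing how a single Pauli term $(Z^{w}X^{v})_A$ is conjugated and averaged.

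The core computation is the following commutation identity: for a fixed Pauli $P = Z^{w}X^{v}$, conjugation by $Z^{z}X^{x}$ produces $Z^{z}X^{x}\, P\, (Z^{z}X^{x})^\dagger = (-1)^{\,z\cdot v \oplus x\cdot w}\, P$, since Paulis either commute or anticommute. Averaging the resulting sign over all $z,x\in\{0,1\}^l$ gives $\frac{1}{2^{2l}}\sum_{z,x}(-1)^{\,z\cdot v \oplus x\cdot w}$, which equals $1$ when $(w,v)=(0,0)$ and $0$ otherwise. Hence every Pauli term except the identity is annihilated by the twirl, so the $A$-part of the average collapses to the coefficient of the identity component of $\rho$ on $A$, tensored with the corresponding $B$-operator. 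Identifying this surviving identity-coefficient with $\tr_A(\rho)$ up to the normalization $\frac{1}{2^l}$ then yields exactly $\frac{1}{2^l}\mcI_A\otimes\tr_A(\rho)$.

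An alternative I would keep in mind, in case the basis expansion feels heavy, is to verify the identity on matrix units. It suffices by linearity to check the claim for $\rho = \ket{i}\bra{j}_A\otimes N_B$ with $i,j$ standard basis strings and $N$ arbitrary on $B$. A direct evaluation of $\frac{1}{2^{2l}}\sum_{z,x}Z^{z}X^{x}\ket{i}\bra{j}X^{x}Z^{z}$ shows the $X^x$ shift sends the summand to $\ket{i\oplus x}\bra{j\oplus x}$ with a relative phase $(-1)^{z\cdot(i\oplus j)}$ from the $Z^z$ layer; the $z$-average kills all terms unless $i=j$, and the $x$-average then spreads the diagonal term $\ket{i}\bra{i}$ uniformly over all basis states, producing $\frac{1}{2^l}\mcI_A$ times the scalar $\braket{j}{i}$, i.e. $\frac{1}{2^l}\mcI_A \cdot \delta_{ij}$. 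Summing against $N_B$ reconstructs $\frac{1}{2^l}\mcI_A\otimes\tr_A(\rho)$.

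The main obstacle is not any single hard step but rather bookkeeping: correctly tracking the anticommutation phases and confirming that the exponential sum $\sum_{z,x}(-1)^{z\cdot v\oplus x\cdot w}$ is a genuine indicator for $(w,v)=0$. I would handle this by factoring the sum over the $2l$ independent bits and invoking the elementary identity $\sum_{b\in\{0,1\}}(-1)^{b c}=2\,\delta_{c,0}$ coordinatewise. Everything else is linearity and the orthogonality of the Pauli basis, both of which are standard.
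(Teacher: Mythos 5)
Your proposal is correct, and it in fact contains two arguments, one of which coincides with the paper's. Your fallback — checking the identity on matrix units $\rho = \ket{i}\bra{j}_A\otimes N_B$, letting the $z$-sum produce the phase $(-1)^{z\cdot(i\oplus j)}$ that kills all off-diagonal terms, then letting the $x$-sum spread the surviving $\ket{i}\bra{i}$ uniformly into $\frac{1}{2^l}\mcI_A$ — is exactly the paper's proof, down to the order in which the two sums are evaluated. Your primary route is genuinely different: you expand $\rho$ in the Pauli basis on $A$, compute the conjugation phase $(-1)^{z\cdot v\oplus x\cdot w}$ for a fixed $P = Z^wX^v$, and observe that averaging annihilates every component except the identity. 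That twirl argument buys generality — it exhibits the average as the completely depolarizing channel on $A$, a statement that transfers verbatim to any orthogonal operator basis or unitary $1$-design — at the cost of needing the commutation identity and the normalization $M_{0,0} = 2^{-l}\tr_A(\rho)$ coming from the orthogonality relation $\tr(P^\dagger Q) = 2^l\,\delta_{P,Q}$; the paper's matrix-unit computation is more elementary, requiring only the action of $X^x$ and $Z^z$ on standard basis states. Both phase computations in your write-up are correct, including the indicator identity $\frac{1}{2^{2l}}\sum_{z,x}(-1)^{z\cdot v\oplus x\cdot w} = \delta_{(w,v),(0,0)}$. One cosmetic slip: your expansion $\sum_{w,v,w',v'}c_{w,v,w',v'}(Z^wX^v)_A\otimes M_{w',v'}$ carries redundant indices — you need only $\rho = \sum_{w,v}(Z^wX^v)_A\otimes M_{w,v}$ with the $M_{w,v}$ operators on $B$ — but this does not affect the soundness of the argument.
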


The information theoretically secure Pauli one time pad encryption scheme can be easily transformed into a computationally secure encryption scheme (\cite{broadbentjeffery2014}). To do so, the client simply encrypts his classical Pauli keys $(z,x)$ (using a classical homomorphic encryption scheme) and includes the encryption $\textrm{Enc}(z,x)$ as part of the encryption of the state $\ket{\psi}$;  the encryption is now a two part encryption, containing classically encrypted keys and the Pauli one time padded quantum state $Z^zX^x\ket{\psi}$. To decode, the client requests both the Pauli key encryptions and the quantum state. He first decrypts the Pauli key encryptions to obtain the Pauli keys, and then applies the inverse of the Pauli keys to the quantum state.

\subsubsection{Homomorphic Gate Application}
In order to apply quantum gates homomorphically to the computationally secure Pauli encryption scheme, the server will need to run two separate computations: a classical homomorphic computation on the Pauli keys and a quantum computation on the one time padded state. In this section, we show how the server is able to perform the following transformations for all gates $V$ in a universal set of gates (our universal set will be the Clifford group along with the Toffoli gate):
\begin{eqnarray}
\textrm{Enc}(z,x)\rightarrow \textrm{Enc}(z',x')\label{eq:blindcomputingmappingoverview}\\
Z^zX^x\ket{\psi} \rightarrow Z^{z'}X^{x'} V \ket{\psi}\label{eq:blindcomputingmappingoverviewquantum}
\end{eqnarray}

\paragraph{Homomorphic Application of Pauli and Clifford Gates}
To achieve our goal, we simply need to show that the transformations in \eqref{eq:blindcomputingmappingoverview}/\eqref{eq:blindcomputingmappingoverviewquantum} can be computed if $V$ is a Clifford gate or a Toffoli gate. For the case of Clifford gates, we rely on ideas introduced in \cite{childs2001saq} and extended to the computational setting in \cite{broadbentjeffery2014}. To provide intuition, we begin with the case in which $V$ is a Pauli gate $Z^aX^b$. In this case, the server only performs the classical part of the parallel computation (in \eqref{eq:blindcomputingmappingoverview}): he homomorphically updates his Pauli keys from $\textrm{Enc}(z,x)$ to $\textrm{Enc}(z',x')$, for $(z' = z\oplus a, x' = x\oplus b)$. The server has now effectively applied $Z^aX^b$ to his one time padded state, since 
\begin{equation}
   Z^zX^x\ket{\psi} = Z^{z'\oplus a}X^{x'\oplus b}\ket{\psi} = Z^{z'}X^{x'}Z^aX^b\ket{\psi}
\end{equation}
The equality follows up to a global phase since Pauli operators anti commute.

We continue to the case in which $V$ is equal to a Clifford gate $C$. Clifford gates are applied by two parallel computations: a homomorphic Pauli key update and a quantum operation on the one time padded state. The server applies the gate $C$ to his one time padded state, resulting in $CZ^zX^x\ket{\psi}$. We now take advantage of the fact that the Clifford group preserves the Pauli group by conjugation: for all $z,x$, there exist $z',x'$ such that
\begin{equation}
   C Z^zX^x\ket{\psi} = Z^{z'}X^{x'}C\ket{\psi}
\end{equation}
To complete the Clifford application, the server homomorphically updates his Pauli keys from $\textrm{Enc}(z,x)$ to $\textrm{Enc}(z',x')$.

\paragraph{Homomorphic Application of the Toffoli Gate}
The Toffoli gate is more complicated. It cannot be applied by parallel quantum/classical operations by the server, as was done for Clifford gates. This is because it does not preserve Pauli operators by conjugation; applying a Toffoli directly to a 3 qubit one time padded state yields:
\begin{equation}
TZ^zX^x\ket{\psi} = T(Z^zX^x)T^\dagger T\ket{\psi}
\end{equation}
The correction $T(Z^zX^x)T^\dagger$ is not a Pauli operator, as was the case for Clifford operators; it is instead a product of Pauli and Clifford operators, where the Clifford operator involves Hadamard gates and gates of the form $\textrm{CNOT}^{b_{zx}}$ ($b_{zx}$ is a bit which depends on the Pauli keys $z,x$). For the exact form of this operator, see Section \ref{sec:toffoliapp}. Since the correction is a Clifford gate and not a Pauli gate, it cannot be removed by a simple homomorphic Pauli key update by the server. 

In order to complete the application of the Toffoli gate, the server will need to remove the operators $\textrm{CNOT}^{b_{zx}}$ up to Pauli operators. Since the server holds the encrypted Pauli keys, we can assume the server can compute an encryption of $b_{zx}$. Therefore, we have reduced the question of applying a Toffoli gate on top of a one time padded state to the following question: can a \BQP\ server use a ciphertext $c$ encrypting a bit $s$ to apply $\textrm{CNOT}^s$ to a quantum state (up to Pauli operators)? In our setting specifically, $s$ will be a function of the Pauli keys of the one time padded state. 

\subsection{\capitalcnot}\label{sec:descofhiddensum}
We now present the key idea in this paper: we show how a \BQP\ server can apply $\textrm{CNOT}^s$ if he holds a ciphertext $c$ encrypting a bit $s$. We call this procedure an \textit{\cnot}. We first show how to perform this operation in an ideal scenario in which the ciphertext $c$ is a trapdoor claw-free function pair hiding the bit $s$. We then generalize to the case in which $c$ is a ciphertext from a classical homomorphic encryption scheme which satisfies certain properties, which we will describe as they are used. 

In our ideal scenario, there exists finite sets $\sR, \sY$ and the ciphertext $c$ is equal to a trapdoor claw-free function pair $f_0,f_1: \{0,1\}\times\sR\rightarrow\sY$ with one additional property. As a reminder of trapdoor claw-free function pairs, recall that both $f_0,f_1$ are injective and their images are equal. There also exists a trapdoor which allows for efficient inversion of both functions. Note that we have introduced an extra bit in the domain; this bit of the preimage will be used to hide the bit $s$. The property we require is as follows: for all $\mu_0,\mu_1\in\{0,1\}$ and $r_0,r_1\in\sR$ for which $f_0(\mu_0,r_0) = f_1(\mu_1,r_1)$, $\mu_0\oplus \mu_1 = s$ ($s$ is the value encrypted in the ciphertext $c$). 

Our \cnot\ boils down to the ability to extract an encrypted bit from a classical encryption and instead store it in superposition in a quantum state. The claw-free function pair $f_0,f_1$ described above serves as a classical encryption for the bit $s$ which immediately allows this task. At a high level (we discuss the details in the next paragraph), this is because it is possible for the server to compute the following superposition, for a random claw $(\mu_0,r_0),(\mu_1,r_1)$:
\begin{equation}\label{eq:tcfsuperposition0}
    \frac{1}{\sqrt{2}}\sum_{b\in\{0,1\}}\ket{\mu_b}\ket{r_b}
\end{equation}
Since $\mu_0\oplus \mu_1 = s$, the above state can be written as:
\begin{equation}\label{eq:tcftosuperposition}
    (X^{\mu_0}\otimes \mcI)\frac{1}{\sqrt{2}}\sum_{b\in\{0,1\}}\ket{b\cdot s}\ket{r_b}
\end{equation}
Therefore, the server is able to easily convert a classical encryption of $s$ (which is in the form a trapdoor claw-free function pair) to a quantum superposition which contains $s$.

It is quite straightforward to use the above process to apply the \cnot: the superposition over the claw in \eqref{eq:tcfsuperposition0} is simply entangled with the first qubit of the quantum state on which the CNOT is to be applied. We now describe this process in detail. Assume the server would like to apply $\textrm{CNOT}^s$ to a 2 qubit state $\ket{\psi} = \sum_{a,b\in\{0,1\}}\alpha_{ab}\ket{a,b}$. The server begins by entangling the first qubit of the state $\ket{\psi}$ with a random claw of $f_0,f_1$, which proceeds as follows. The server uses the first qubit of $\ket{\psi}$ to choose between the functions $f_0,f_1$ in order to create the following superposition:
\begin{equation}\label{eq:superpositionoverf}
   \frac{1}{\sqrt{2|\sR|}} \sum_{a,b,\mu\in\{0,1\},r\in\sR}\alpha_{ab}\ket{a,b}\ket{\mu,r}\ket{f_a(\mu,r)}
\end{equation}
Now the server measures the final register to obtain $y\in \sY$. Let $(\mu_0,r_0), (\mu_1,r_1)$ be the two preimages of $y$ ($f_0(\mu_0,r_0) = f_1(\mu_1,r_1) = y$). The remaining state is:
\begin{equation}\label{eq:statecollapseoverview}
    \sum_{a,b\in\{0,1\}}\alpha_{ab}\ket{a,b}\ket{\mu_a}\ket{r_a}
\end{equation}
Recall that to apply CNOT$^s$, the value $a\cdot s$ must be added to the register containing $b$. This is where the structure in \eqref{eq:tcftosuperposition} (which relies on the fact that $\mu_0\oplus \mu_1 = s$) comes in to play: to add $a\cdot s$, the server XORs $\mu_a$ into the second register, which essentially applies the operation CNOT$^s$:
\begin{equation}
    \sum_{a,b\in\{0,1\}}\alpha_{ab}\ket{a,b\oplus \mu_a}\ket{\mu_a}\ket{r_a} = \sum_{a,b\in\{0,1\}}\alpha_{ab}(\mcI\otimes X^{\mu_0})\textrm{CNOT}_{1,2}^s\ket{a,b}\otimes \ket{\mu_a,r_a} 
\end{equation}
Finally, the server applies a Hadamard transform on the registers containing $\mu_a,r_a$ and measures to obtain $d$. If we let $(\mu_a,r_a)$ denote the concatenation of the two values, the resulting state (up to a global phase) is
\begin{equation}
(Z^{d\cdot((\mu_0,r_0)\oplus (\mu_1,r_1))}\otimes X^{\mu_0})\textrm{CNOT}_{1,2}^s\sum_{a,b\in\{0,1\}}\alpha_{ab}\ket{a,b}
\end{equation}
In order to complete the \cnot, the server requires an encryption of the trapdoor of the functions $f_0,f_1$. The server can then homomorphically compute the bits $\mu_0$ and $d\cdot((\mu_0,r_0)\oplus (\mu_1,r_1))$ and use these bits to update his Pauli keys.

\subsubsection{Trapdoor Claw-free Pair Construction}
So far, we have shown how to apply the \cnot\ in the case that the ciphertext $c$ encrypting the bit $s$ is a trapdoor claw-free function pair $f_0,f_1$ which hides $s$. To build a homomorphic encryption scheme, we need to show how the \cnot\ can be applied if $c$ instead comes from a classical homomorphic encryption scheme, which we call HE. We now show that if HE satisfies certain properties, the function pair $f_0,f_1$ hiding the bit $s$ can be constructed (by the server) using $c$. 

The function $f_0$ will be the encryption function of HE. The function $f_1$ is the function $f_0$ shifted by the homomorphic XOR of the ciphertext $c$ encrypting the bit $s$: $f_1 = f_0 \oplus_H c$ ($\oplus_H$ is the homomorphic XOR operation). To ensure that $f_0,f_1$ are injective, we require that HE has the property of randomness recoverability: there must exist a trapdoor which allows recovery of $\mu_0,r_0$ from a ciphertext $\textrm{Enc}(\mu_0;r_0)$ ($\textnormal{Enc}(\mu_0;r_0)$ denotes the encryption of a bit $\mu_0$ with randomness $r_0$). We also require that the homomorphic XOR operation is efficiently invertible using only the public key of HE.

Unfortunately, the images of the functions $f_0,f_1$ are not equal. We will instead require the weaker (but still sufficient) condition that there exists a distribution $D$ over the domain of the functions such that $f_0(D)$ and $f_1(D)$ are statistically close. We replace \eqref{eq:superpositionoverf} with the corresponding weighted superposition:
\begin{equation}\label{eq:overviewsuperpositionoverD}
    \sum_{a,b,\mu\in\{0,1\},r}\alpha_{ab}\sqrt{D(\mu,r)}\ket{a}\ket{b}\ket{\mu,r}\ket{f_a(\mu,r)}
\end{equation}
To do so, we require that the server can efficiently create the following superposition: 
\begin{equation}\label{eq:distributionsuperposition}
    \sum_{\mu\in\{0,1\},r}\sqrt{D(\mu,r)}\ket{\mu,r}
\end{equation}
Due to the negligible statistical distance between $f_0(D)$ and $f_1(D)$, when the last register of \eqref{eq:overviewsuperpositionoverD} is measured to obtain $y$, with high probability there exist $\mu_0,r_0,\mu_1,r_1$ such that $y = f_0(\mu_0,r_0) = f_1(\mu_1,r_1)$, which implies that the state collapses to \eqref{eq:statecollapseoverview} and that 
\begin{equation}
    \textrm{Enc}(\mu_0;r_0) = \textrm{Enc}(\mu_1;r_1) \oplus_H c
\end{equation}
Since $\oplus_H$ is the homomorphic XOR operation, $\mu_0\oplus \mu_1 = s$. 

There is one remaining issue with the \cnot\ described above. The requirements above must hold for a classical ciphertext $c$ which occurs at any point during the classical computation on the encrypted Pauli keys. However, in many classical homomorphic encryption schemes, the format of the ciphertext $c$ changes throughout the computation. We know of several schemes for which the above requirements hold for a freshly encrypted ciphertext, but we do not know of any schemes which satisfy the requirements during a later stage of computation. The next section addresses this complication by sufficiently weakening the above requirements while preserving the functionality of the \cnot.

\subsection{Quantum-Capable Classical Homomorphic Encryption Schemes}\label{sec:quantumcapableoverview}
In this section, we define quantum-capable homomorphic encryption schemes, i.e. classical leveled fully homomorphic encryption schemes which can be used to evaluate quantum circuits. To justify why we must weaken the requirements listed in Section \ref{sec:descofhiddensum}, we begin with a description of the ideal high level structure of a quantum-capable homomorphic encryption scheme. In many classical homomorphic encryption schemes, the encryption of a bit $b$ can be thought of as a random element of a subset $S_b$, perturbed by some noise term $\epsilon$. As the computation progresses, we will require that the structure of the ciphertext remains the same; it must still be a random element of $S_b$, but the noise term may grow throughout the computation. We will also require that the homomorphic XOR operation is natural, in the sense that the noise of the output ciphertext is simply the addition of the two noise terms of the input ciphertexts. If these two conditions hold (invariance of the ciphertext form and the existence of a natural XOR operation), deriving a distribution $f_0(D)$ over ciphertexts which remains roughly the same after shifting by the homomorphic XOR of the ciphertext $c$ (as needed in Section \ref{sec:descofhiddensum}) is straightforward. We simply choose $D$ to sample the noise term from a discrete Gaussian distribution with width sufficiently larger than the magnitude of the noise term of the ciphertext $c$. 

Unfortunately, we do not know of a classical homomorphic encryption scheme which satisfies both the conditions (ciphertext form and natural XOR operation) at once. To account for this difficulty, we define a quantum-capable homomorphic encryption schemes as follows. We call a classical homomorphic encryption scheme HE quantum-capable if there exists an alternative encryption scheme AltHE which satisfies the following conditions. First, given a ciphertext $c$ under HE, it must be possible for the server to convert $c$ to a ciphertext $\hat{c}$ under AltHE. The conversion process must maintain the decrypted value of the ciphertext. Second, AltHE must have a natural homomorphic XOR operation (which is also efficiently invertible). Third, there must exist a distribution $f_0(D)$ over encryptions under AltHE which must remain almost the same after shifting by the homomorphic XOR of $\hat{c}$ and must allow efficient construction of the superposition in \eqref{eq:distributionsuperposition}. In addition, it must be possible to both decrypt and recover randomness from ciphertexts under AltHE given the appropriate secret key and trapdoor information. This definition is formalized in Section \ref{sec:quantumcapablerequirements}.

Finally, we describe how to connect this weaker definition to the \cnot\ given in Section \ref{sec:descofhiddensum}. We begin with a quantum-capable homomorphic encryption scheme HE, which is used to encrypt the Pauli keys. HE satisfies the ciphertext form requirement but may not have a natural XOR operation. Each time the server needs to apply an \cnot\ (controlled by a ciphertext $c$ encrypting a bit $s$ under HE), he will convert $c$ to a ciphertext $\hat{c}$ under AltHE, which does have a natural XOR operation. Using AltHE (rather than HE) the server performs the operations described in Section \ref{sec:descofhiddensum}. Upon obtaining his measurement results (denoted as $y$ and $d$), the server will encrypt both $\hat{c}$ and $y,d$ under HE. The server will then use the secret key and trapdoor information of AltHE, which are provided to him as encryptions under HE, to homomorphically recover the randomness and decrypted values from both $y$ and $\hat{c}$. This encrypted information can be used to homomorphically compute the Pauli key updates. The entire classical homomorphic computation is done under HE.

\subsection{Example of a Quantum-Capable Classical Encryption Scheme}\label{sec:overviewquantumcapableexample}
In Section \ref{sec:quantumcapableexample} (Theorem \ref{thm:dualhequantumcapable}), we show that an existing classical fully homomorphic encryption scheme is quantum-capable. We use the structure of the scheme from \cite{fhelwe}, which is a leveled fully homomorphic encryption scheme built by extending the vector ciphertexts of \cite{regev2005} to matrices. The resulting encryption scheme does satisfy the ciphertext form requirement (as described in Section \ref{sec:quantumcapableoverview}), but since the underlying encryption scheme (\cite{regev2005}) does not have the randomness recoverability property, neither does \cite{fhelwe}. We therefore alter the scheme from \cite{fhelwe} to use the dual encryption scheme of \cite{regev2005}, which was introduced in \cite{gentry2008} and allows randomness recovery, as the underlying encryption scheme. We call the resulting scheme DualHE and the underlying scheme of \cite{gentry2008} Dual. 

We use the scheme DualHE as an instantiation of the scheme we called HE in Section \ref{sec:quantumcapableoverview}. Although the underlying scheme Dual does have a natural XOR operation, the extension to matrices compromises the XOR operation; once the ciphertexts are matrices, addition is performed over a larger field. Luckily, it is easy to convert a ciphertext of DualHE to a ciphertext of Dual. We therefore use Dual as AltHE. 

In Section \ref{sec:quantumcapableexample}, we first describe the scheme Dual from \cite{gentry2008} and we then show how to extend it to DualHE using \cite{fhelwe}. Next, we show that DualHE satisfies the ciphertext form requirement and that a ciphertext under DualHE can be converted to a ciphertext under Dual. In Theorem \ref{thm:dualhehomomorphic}, we use these properties to show  that DualHE is a classical leveled fully homomorphic encryption scheme. Finally, we show in Theorem \ref{thm:dualhequantumcapable} that DualHE is quantum-capable with only a small modification of parameters (the underlying assumption for both the classical FHE and the quantum-capable instantiation is the hardness of learning with errors with a superpolynomial noise ratio). 

\subsection{Extension to Quantum Leveled Fully Homomorphic Encryption}\label{sec:extensiontohomomorphicoverview}
We have so far provided a quantum fully homomorphic encryption scheme with classical keys under the assumption of circular security: the server must be provided the encrypted secret key and trapdoor information in order to update his encrypted Pauli keys after each \cnot. Our notion of circular security here will be slightly stronger than the standard notion, due to the encryption of the trapdoor (instead of just the secret key). As an alternative to assuming circular security, we can build a quantum leveled fully homomorphic encryption scheme by employing a technique which is commonly used in classical homomorphic encryption schemes (Section 4.1 in \cite{homomorphic}): we will encrypt the secret key and trapdoor information under a fresh public key. In other words, the $i^{th}$ level secret key $sk_i$ and its corresponding trapdoor information are encrypted under a fresh public key $pk_{i+1}$ and given to the server as part of the evaluation key. The computation of the Pauli key updates corresponding to the \cnot s of level $i$ is performed under $pk_{i+1}$ (i.e. the corresponding $\hat{c},y$ and $d$ from each \cnot\ in level $i$ will be encrypted, by the server, under $pk_{i+1}$ - see the last paragraph of Section \ref{sec:quantumcapableoverview}). 

Note that with the introduction of the leveled scheme, we can see the classical portion of the quantum homomorphic computation as follows. Each level of the quantum computation can be thought of as a series of Clifford gates followed by one layer of non intersecting Toffoli gates, finishing with a layer of non intersecting \cnot s. It follows that the classical homomorphic computation of level $i$ consists of first decrypting and recovering randomness from the cipertexts corresponding to the \cnot s from level $i - 1$, then performing the Pauli key updates corresponding to the \cnot s from level $i - 1$ and finally performing the Pauli key updates corresponding to the Clifford and Toffoli gates of level $i$. The ciphertexts which result from this computation are then used as the control bits for the layer of \cnot s of level $i$. 

Intuitively, this leveled approach is secure since each secret key is protected by the semantic security of the encryption scheme under an independent public key. To prove security, we start with the final level of encryption. If there are $L$ levels of the circuit, then there will be no trapdoor or secret key information provided corresponding to $pk_{L+1}$. It follows that all encryptions under $pk_{L+1}$ can be replaced by encryptions of 0; now there is no encrypted information provided corresponding to $sk_L$. Then all encryptions under $pk_L$ can be replaced by encryptions of 0, and we can continue in this manner until we reach $pk_1$, which will imply security of encryptions under the initial public key $pk_1$. The scheme and proof of security are presented in Section \ref{sec:extensiontohomomorphic}, proving that quantum-capable classical encryption schemes can be used to build a quantum leveled fully homomorphic encryption scheme (see Theorem \ref{thm:qhehomomorphic} for a formal statement).

\subsection{Paper Outline} 
We begin with preliminaries in Section \ref{sec:homomorphicprelim0}. In Section \ref{sec:quantumcapablerequirements}, we define quantum-capable encryption schemes by listing the requirements that a classical homomorphic encryption scheme must satisfy in order to be used to evaluate quantum circuits, as described in Section \ref{sec:quantumcapableoverview}. We use this definition to formally prove the correctness (in Claim \ref{cl:sumcorrectness}) of the \cnot\ given in Section \ref{sec:descofhiddensum}. Section \ref{sec:quantumcapableexample} covers Section \ref{sec:overviewquantumcapableexample}: we provide an example of a classical homomorphic encryption scheme which is quantum-capable. In Section \ref{sec:extensiontohomomorphic}, we formally show how to extend a quantum-capable classical leveled fully homomorphic encryption scheme to a quantum leveled fully homomorphic encryption scheme (as described in Section \ref{sec:extensiontohomomorphicoverview}).

\section{Preliminaries}\label{sec:homomorphicprelim0}

Throughout this paper, we borrow notation and definitions from \cite{abe2008}, \cite{abem} and \cite{oneproverrandomness}. Parts of the following sections are also taken from these sources. 

\subsection{Notation}\label{sec:prelimnotation}
For all $q \in \mN$ we let $\mZ_q$ denote the ring of integers modulo $q$. We represent elements in $\mZ_q$ using numbers in the range $(-\frac{q}{2}, \frac{q}{2}] \cap \mZ$. We denote by $\trnq{x}$ the unique integer $y$ s.t.\ $y = x \pmod{q}$ and $y \in (-\frac{q}{2}, \frac{q}{2}]$. For $x\in\mZ_q$ we define $\abs{x}=|{\trnq{x}}|$. For a vector $\*u\in \mZ_q^n$, we write $\lVert \*u \rVert_{\infty}\leq \beta$ if each entry $u_i$ in $\*u$ satisfies $|u_i|\leq \beta$. Similarly, for a matrix $U\in \mZ_q^{n\times m}$, we write $\lVert U \rVert_{\infty}\leq \beta$ if each entry $u_{i,j}$ in $U$ satisfies $|u_{i,j}|\leq \beta$. When considering an $s\in \{0,1\}^n$ we sometimes also think of $s$ as an element of $\mZ_q^n$, in which case we write it as $\*s$. 

We use the terminology of polynomially bounded, super-polynomial, and negligible functions. A function $n: \mN \to \mR_+$ is \emph{polynomially bounded} if there exists a polynomial $p$ such that $n(\lambda)\leq p(\lambda)$ for all $\lambda \in \mN$. A function $n: \mN \to \mR_+$ is \emph{negligible} (resp. \emph{super-polynomial}) if for every polynomial $p$, $p(\lambda) n(\lambda)\to_{\lambda\to\infty} 0$ (resp. $ n(\lambda)/p(\lambda)\to_{\lambda\to\infty} \infty$).

We generally use the letter $D$ to denote a distribution over a finite domain $X$, and $f$ for a density on $X$, i.e. a function $f:X\to[0,1]$ such that $\sum_{x\in X} f(x)=1$. We often use the distribution and its density interchangeably. We write $U$ for the uniform distribution. We write $x\leftarrow D$ to indicate that $x$ is sampled from distribution $D$, and $x\leftarrow_U X$ to indicate that $x$ is sampled uniformly from the set $X$. 
We write $\mathcal{D}_X$ for the set of all densities on $X$.
For any $f\in\mathcal{D}_X$, $\supp(f)$ denotes the support of $f$,
\begin{equation*}
    \supp(f) \,=\, \big\{x\in X \,|\; f(x)> 0\big\}\;.
\end{equation*}
For two densities $f_1$ and $f_2$ over the same finite domain $X$, the Hellinger distance  between $f_1$ and $f_2$ is
\begin{equation}\label{eq:bhatt}
H^2(f_1,f_2) \,=\, 1- \sum_{x\in X}\sqrt{f_1(x)f_2(x)}\;.
\end{equation}
and the total variation distance between $f_1$ and $f_2$ is:
\begin{equation}\label{eq:deftotalvariation}
\TV{f_1}{f_2} \,=\, \frac{1}{2} \sum_{x\in X}|f_1(x) - f_2(x)|\;.
\end{equation}
The following lemma will be useful:
\begin{lem}\label{lem:projectionprob}
Let $D_0, D_1$ be distributions over a finite domain $X$. Let $X'\subseteq X$. Then:
\begin{equation}
    \Big|\Pr_{x\leftarrow D_0}[x\in X'] - \Pr_{x\leftarrow D_1}[x\in X']\Big| \,\leq\, \TV{D_0}{D_1}
\end{equation}
\end{lem}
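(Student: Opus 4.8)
The plan is to unfold the definition of total variation distance and to exploit the normalization $\sum_{x\in X} D_0(x) = \sum_{x\in X} D_1(x) = 1$, which is the one ingredient that makes the constant come out correctly. First I would rewrite the quantity inside the absolute value as a signed sum of pointwise discrepancies over $X'$:
\[
\Pr_{x\leftarrow D_0}[x\in X'] - \Pr_{x\leftarrow D_1}[x\in X'] = \sum_{x\in X'}\big(D_0(x) - D_1(x)\big).
\]
Call this quantity $A$, and let $B = \sum_{x\in X\setminus X'}\big(D_0(x)-D_1(x)\big)$ be the corresponding sum over the complement of $X'$ in $X$.

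The key observation is that $A + B = \sum_{x\in X}\big(D_0(x)-D_1(x)\big) = 1 - 1 = 0$, so $A = -B$ and in particular $|A| = |B|$. Applying the triangle inequality to each of $A$ and $B$ separately yields $|A| \le \sum_{x\in X'}|D_0(x)-D_1(x)|$ and $|B| \le \sum_{x\in X\setminus X'}|D_0(x)-D_1(x)|$. Adding these two bounds and using $|A|=|B|$ gives
\[
2|A| \le \sum_{x\in X}|D_0(x)-D_1(x)| = 2\,\TV{D_0}{D_1},
\]
and dividing by $2$ gives $|A| \le \TV{D_0}{D_1}$, which is exactly the claimed inequality.

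I do not expect a genuine obstacle here, as this is a standard property of total variation distance. The only point requiring care is the constant. A direct application of the triangle inequality to $A$ by itself would only bound $|A|$ by $\sum_{x\in X}|D_0(x)-D_1(x)| = 2\,\TV{D_0}{D_1}$, which is a factor of two too weak. Splitting the full sum into the parts supported on $X'$ and on its complement, and then using normalization to conclude $A=-B$ (hence $|A|=|B|$), is precisely the step that removes this spurious factor and delivers the sharp bound.
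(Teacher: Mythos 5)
Your proof is correct: the decomposition into $A$ and $B$, the use of normalization to get $A+B=0$, and the averaging of the two triangle-inequality bounds together give the sharp constant, exactly as in the standard argument for this fact. The paper itself states Lemma \ref{lem:projectionprob} without proof, treating it as a standard property of total variation distance, so there is nothing in the paper to diverge from; your write-up is precisely the canonical proof, and your remark about why a naive single application of the triangle inequality loses a factor of $2$ is the right point of care.
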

We require the following definition:
\begin{deff}{\textbf{Computational Indistinguishability of Distributions}}\label{def:compinddist}
Two families of distributions $\{D_{0,\lambda}\}_{\lambda\in\mN}$ and $\{D_{1,\lambda}\}_{\lambda\in\mN}$ (indexed by the security parameter $\lambda$) are computationally indistinguishable if for all quantum polynomial-time attackers $\mathcal{A}$ there exists a negligible function $\mu(\cdot)$ such that for all $\lambda\in\mN$
\begin{equation}
\Big|\Pr_{x\leftarrow D_{0,\lambda}}[\mathcal{A}(x) = 0] - \Pr_{x\leftarrow D_{1,\lambda}}[\mathcal{A}(x) = 0]\Big| \,\leq\, \mu(\lambda)\;.
\end{equation}
\end{deff}

\subsection{Learning with Errors and Discrete Gaussians}\label{sec:lweprelim}

This background section on the learning with errors problem is taken directly from \cite{oneproverrandomness}. For a positive real $B$ and positive integers $q$, the truncated discrete Gaussian distribution over $\mZ_q$ with parameter $B$ is supported on $\{x\in\mZ_q:\,\|x\|\leq B\}$ and has density
\begin{equation}\label{eq:d-bounded-def}
 D_{\mZ_q,B}(x) \,=\, \frac{e^{\frac{-\pi\lVert x\rVert^2}{B^2}}}{\sum\limits_{x\in\mZ_q,\, \|x\|\leq B}e^{\frac{-\pi\lVert x\rVert^2}{B^2}}} \;.
\end{equation}
We note that for any $B>0$, the truncated and non-truncated distributions have statistical distance that is exponentially small in $B$~\cite[Lemma 1.5]{banaszczyk1993new}. For a positive integer $m$, the truncated discrete Gaussian distribution over $\mZ_q^m$ with parameter $B$ is supported on $\{x\in\mZ_q^m:\,\|x\|\leq B\sqrt{m}\}$ and has density
\begin{equation}\label{eq:d-bounded-def-m}
\forall x = (x_1,\ldots,x_m) \in \mZ_q^m\;,\qquad D_{\mZ_q^m,B}(x) \,=\, D_{\mZ_q,B}(x_1)\cdots D_{\mZ_q,B}(x_m)\;.
\end{equation}

\begin{lem}\label{lem:distributiondistance}
Let $B$ be a positive real number and $q,m$ be positive integers. Let $\*e \in \mZ_q^m$. The Hellinger distance between the distribution $D = D_{\mZ_q^{m},B}$ and the shifted distribution $D+\*e$ satisfies
\begin{equation}
H^2(D,D+\*e) \,\leq\, 1- e^{\frac{-2\pi \sqrt{m}\|\*e\|}{B}}\;,
\end{equation}
and the statistical distance between the two distributions satisfies
\begin{equation}
\big\| D - (D+\*e) \big\|_{TV}^2 \,\leq\, 2\Big(1 - e^{\frac{-2\pi \sqrt{m}\|\*e\|}{B}}\Big)\;.
\end{equation}
\end{lem}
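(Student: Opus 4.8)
The plan is to recast both inequalities in terms of the Bhattacharyya affinity $A(\*e) := \sum_x \sqrt{D(x)\,(D+\*e)(x)} = \sum_x\sqrt{D(x)\,D(x-\*e)}$, since by the definition \eqref{eq:bhatt} we have $H^2(D,D+\*e) = 1 - A(\*e)$. Thus the Hellinger claim is exactly the lower bound $A(\*e)\ge e^{-2\pi\sqrt{m}\|\*e\|/B}$. The statistical-distance claim then needs no separate work: the standard relation $\TV{f_1}{f_2}\le\sqrt{2H^2(f_1,f_2)}$ gives $\TV{D}{(D+\*e)}^2\le 2H^2(D,D+\*e)\le 2\big(1-e^{-2\pi\sqrt m\|\*e\|/B}\big)$ at once. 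So the whole lemma reduces to a single lower bound on $A(\*e)$.

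First I would tensorize. Because $D=D_{\mZ_q^m,B}$ factorizes into one-dimensional densities by \eqref{eq:d-bounded-def-m}, and the summand $\sqrt{D(x)D(x-\*e)}$ is a coordinatewise product summed over a product set, the affinity factorizes as $A(\*e)=\prod_{i=1}^m A_1(e_i)$, where $A_1(e)=\sum_{x\in\mZ_q}\sqrt{D_{\mZ_q,B}(x)D_{\mZ_q,B}(x-e)}$. It then suffices to prove the one-dimensional estimate $A_1(e)\ge e^{-2\pi|e|/B}$ and multiply, using $\sum_i|e_i|=\|\*e\|_1\le\sqrt m\,\|\*e\|$ (Cauchy--Schwarz) to pass from $e^{-2\pi\|\*e\|_1/B}$ to the stated $e^{-2\pi\sqrt m\|\*e\|/B}$.

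For the one-dimensional bound I would complete the square. Writing $\rho_B(x)=e^{-\pi x^2/B^2}$ and $Z=\sum_{|x|\le B}\rho_B(x)$ for the normalizer, the identity $\sqrt{\rho_B(x)\rho_B(x-e)} = e^{-\pi e^2/(4B^2)}\,\rho_B(x-e/2)$ yields
\begin{equation*}
A_1(e) \,=\, \frac{e^{-\pi e^2/(4B^2)}}{Z}\sum_{x:\,|x|\le B,\ |x-e|\le B}\rho_B\!\big(x-\tfrac{e}{2}\big)\;,
\end{equation*}
so the task becomes lower-bounding a truncated, half-shifted Gaussian sum against $Z$. The overlap constraint confines $x-e/2$ to the symmetric window $|x-e/2|\le B-|e|/2$, and comparing this restricted sum to $Z$ only discards the boundary layer of width $\approx|e|/2$ near $\pm B$.

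This comparison is the main obstacle, and it is precisely where the estimate weakens from the "natural" quadratic factor $e^{-\pi e^2/(4B^2)}$ to the stated linear exponent: the Gaussian mass in the discarded boundary layer is of order $|e|/B$ times the total, so a crude bound on the truncation loss (in the regime $\|\*e\|\ll B$ in which the lemma is applied, $\*e$ being the small ciphertext-noise shift) shows the ratio is at least $e^{-2\pi|e|/B+\pi e^2/(4B^2)}$, whence $A_1(e)\ge e^{-2\pi|e|/B}$. The half-integer lattice shift when $e$ is odd, and the discreteness of the tail sum, are minor nuisances controlled by the exponential smallness of $\rho_B$ near the truncation radius. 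Everything else — the tensorization and the final product — is routine.
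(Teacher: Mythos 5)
Your architecture is the standard one for this lemma, and most of it is sound: the paper itself never proves Lemma \ref{lem:distributiondistance} (it is imported as background from \cite{oneproverrandomness}, whose proof rests on the same Gaussian-shift computation), and your skeleton — reducing both claims to a lower bound on the affinity $A(\*e)=\sum_x\sqrt{D(x)D(x-\*e)}$, with the TV claim following from $\TV{f_1}{f_2}^2\le(1-A)(1+A)\le 2(1-A)=2H^2$; exact tensorization $A(\*e)=\prod_i A_1(e_i)$ via the product form \eqref{eq:d-bounded-def-m}; the completion-of-square identity $\sqrt{\rho_B(x)\rho_B(x-e)}=e^{-\pi e^2/(4B^2)}\rho_B(x-e/2)$; and $\|\*e\|_1\le\sqrt m\,\|\*e\|$ — is all correct.

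The genuine gap sits exactly at the step you wave through, and the one justification you offer for it is false. You claim the boundary/grid errors are "controlled by the exponential smallness of $\rho_B$ near the truncation radius," but the truncation radius equals the width parameter: at $|x|\approx B$ one has $\rho_B(x)=e^{-\pi B^2/B^2}=e^{-\pi}\approx 0.04$, a \emph{constant} fraction of the peak, not exponentially small. What actually rescues the argument is constant-accounting: the discarded layer has about $|e|/2+O(1)$ integer points per side, each of mass at most $e^{-\pi}/Z$ with $Z\ge cB$ (for $B$ at least a constant), so the lost mass is at most $C|e|/B$ with $C$ numerically well below $2\pi$; converting $1-C|e|/B\ge e^{-C'|e|/B}$ and absorbing the half-integer grid-shift error (which is $O(1/B)\le O(|e|/B)$ because a nonzero integer shift has $|e|\ge1$) then closes the target inequality, i.e., kept mass $\ge e^{-2\pi|e|/B+\pi e^2/(4B^2)}$. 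Since this inequality \emph{is} the content of the lemma, asserting that "a crude bound shows" it is not a proof. One further point: your restriction to $\|\*e\|\ll B$ is not merely the regime "in which the lemma is applied" — it is the regime in which the statement is true at all. For $|e|>2B$ in a single coordinate the two supports are disjoint, so $H^2(D,D+\*e)=1>1-e^{-2\pi\sqrt m\|\*e\|/B}$, and the stated bound fails; a complete write-up should therefore make the hypothesis on $\|\*e\|/B$ explicit (harmless here, since the lemma is invoked in Theorem \ref{thm:dualhequantumcapable} with $\|\*e\|/B$ negligible).
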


\begin{deff}\label{def:lwehardness}
For a security parameter $\lambda$, let $n,m,q\in \mN$ be integer functions of $\lambda$. Let $\chi = \chi(\lambda)$ be a distribution over $\mZ$. The $\lwe_{n,m,q,\chi}$ problem is to distinguish between the distributions $(\*A, \*A\*s + \*e \pmod{q})$ and $(\*A, \*u)$, where $\*A$ is uniformly random in $\mZ_q^{n \times m}$, $\*s$ is a uniformly random row vector in $\mZ_q^n$, $\vc{e}$ is a  row vector drawn at random from the distribution $\chi^m$, and $\*u$ is a uniformly random vector in $\mZ_q^m$. Often we consider the hardness of solving $\lwe$ for {any} function $m$ such that $m$ is at most a polynomial in $n \log q$. This problem is denoted $\lwe_{n,q,\chi}$. When we write that we make the $\lwe_{n,q,\chi}$ assumption, our assumption is that no quantum polynomial-time procedure can solve the $\lwe_{n,q,\chi}$ problem with more than a negligible advantage in $\lambda$. 
\end{deff}

As shown in \cite{regev2005,PRS17}, for any $\alpha>0$ such that  $\sigma = \alpha q \ge 2 \sqrt{n}$ the $\lwe_{n,q,D_{\mZ_q,\sigma}}$ problem,  where $D_{\mZ_q,\sigma}$ is the discrete Gaussian distribution, is at least as hard as approximating the shortest independent vector problem ($\sivp$) to within a factor of $\gamma = \otild({n}/\alpha)$ in \emph{worst case} dimension $n$ lattices. This is proven using a quantum reduction. Classical reductions (to a slightly different problem) exist as well \cite{Peikert09,BLPRS13} but with somewhat worse parameters. The best known (classical or quantum) algorithm for these problems run in time $2^{\otild(n/\log \gamma)}$. For our construction we assume hardness of the problem against a quantum polynomial-time adversary in the case that $\gamma$ is a super polynomial function in $n$. This is a commonly used assumption in cryptography (for e.g. homomorphic encryption schemes such as \cite{fhelwe}).

We use two additional properties of the LWE problem. The first is that it is possible to generate LWE samples $(\*A,\*A\*s+\*e)$ such that there is a trapdoor allowing recovery of $\*s$ from the samples. 

\begin{thm}[Theorem 5.1 in~\cite{miccancio2012}]\label{thm:trapdoor}
Let $n,m\geq 1$ and $q\geq 2$ be such that $m = \Omega(n\log q)$. There is an efficient randomized algorithm $\GenTrap(1^n,1^m,q)$ that returns a matrix $\*A \in \mZ_q^{m\times n}$ and a trapdoor $t_{\*A}$ such that the distribution of $\*A$ is negligibly (in $n$) close to the uniform distribution. Moreover, there is an efficient algorithm $\Invert$ that, on input $\*A, t_{\*A}$ and $\*A\*s+\*e$ where $\|\*e\| \leq q/(C_T\sqrt{n\log q})$ and $C_T$ is a universal constant, returns $\*s$ and $\*e$ with overwhelming probability over $(\*A,t_{\*A})\leftarrow \GenTrap$. \end{thm}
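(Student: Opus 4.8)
The statement is quoted from Micciancio--Peikert, so the plan is to reconstruct their proof via the gadget-matrix paradigm: first build a fixed \emph{gadget matrix} $\*G$ for which noisy linear equations can be solved in the clear, and then randomize it into a matrix $\*A$ that looks uniform while secretly carrying $\*G$ as a trapdoor.

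First I would construct $\*G$. Take $k=\lceil \log_2 q\rceil$ and the vector $\*g=(1,2,4,\ldots,2^{k-1})$, and set $\*G=\*I_n\otimes \*g\in\mZ_q^{nk\times n}$, so that $nk=\Theta(n\log q)$. The point of this choice is that given $\*G\*s+\*e'$ with $\|\*e'\|$ sufficiently below $q/4$, one recovers each coordinate of $\*s$ digit by digit by rounding against the entry $2^{k-1}$, subtracting, and repeating. This yields an $\Invert$-style routine for $\*G$ with decoding radius of order $q$, together with the recovered noise $\*e'$.

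Next I would define $\GenTrap$. Split the $m$ rows as $\*A=\begin{bmatrix}\bar{\*A}\\ \*G-\*R\bar{\*A}\end{bmatrix}$, where $\bar{\*A}\in\mZ_q^{\bar m\times n}$ is sampled uniformly with $\bar m=m-nk=\Omega(n\log q)$, and $\*R$ is a short matrix (entries in $\{-1,0,1\}$, say) serving as the trapdoor $t_{\*A}=\*R$. The claim that $\*A$ is negligibly close to uniform is the main statistical step: because $\bar m$ is large, the leftover hash lemma (applied to $\bar{\*A}^T$ as hash matrix and the short columns of $\*R^T$) shows that $(\bar{\*A},\*R\bar{\*A})$ is statistically close to $(\bar{\*A},\*U)$ for uniform $\*U$, so the bottom block $\*G-\*R\bar{\*A}$ is close to uniform and independent of $\bar{\*A}$, hence so is $\*A$.

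Finally I would verify $\Invert$ and track the noise. Writing the input as $\*b=\*A\*s+\*e=\begin{bmatrix}\*b_0\\\*b_1\end{bmatrix}$ with $\*b_0=\bar{\*A}\*s+\*e_0$ and $\*b_1=(\*G-\*R\bar{\*A})\*s+\*e_1$, the identity $\*b_1+\*R\*b_0=\*G\*s+(\*e_1+\*R\*e_0)$ reduces the problem to gadget decoding with effective noise $\*e_1+\*R\*e_0$. Since $\*R$ is short with $\Theta(n\log q)$ columns, a singular-value bound gives $\|\*R\*e_0\|\lesssim\sqrt{n\log q}\,\|\*e_0\|$, so the effective noise stays inside the gadget decoding radius precisely when $\|\*e\|\leq q/(C_T\sqrt{n\log q})$; running the $\*G$-inverter then returns $\*s$, after which $\*e=\*b-\*A\*s$ is recovered exactly. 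I expect the two delicate points to be the leftover-hash argument for uniformity and the bookkeeping of the $\*R\*e_0$ term, which is exactly what fixes the constant $C_T$ and the stated noise bound.
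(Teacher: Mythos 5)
You should know that the paper itself offers no proof of this statement---it is imported verbatim as Theorem 5.1 of the cited Micciancio--Peikert work---and your reconstruction is faithful to that source's actual argument: the gadget matrix $\mathbf{G} = \mathbf{I}_n \otimes \mathbf{g}$, the statistical instantiation $\mathbf{A} = \begin{bmatrix}\bar{\mathbf{A}}\\ \mathbf{G}-\mathbf{R}\bar{\mathbf{A}}\end{bmatrix}$ made near-uniform via the leftover hash lemma, and inversion through $\mathbf{b}_1 + \mathbf{R}\mathbf{b}_0 = \mathbf{G}\mathbf{s} + (\mathbf{e}_1 + \mathbf{R}\mathbf{e}_0)$ with the singular-value bound $s_1(\mathbf{R}) = O(\sqrt{n\log q})$ fixing the constant $C_T$ and the stated decoding radius. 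The one detail to flag is that your digit-by-digit rounding of the gadget assumes $q$ is a power of $2$ (which suffices for this paper, where $q$ is so chosen), whereas the theorem as stated covers arbitrary $q \geq 2$, for which the cited source decodes via a short basis of the lattice $\Lambda^{\perp}(\mathbf{g})$ rather than by bit extraction.
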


\subsection{Quantum Computation Preliminaries}

\subsubsection{Quantum Operations}\label{sec:PauliNClifford}
We will use the $X, Y$ and $Z$ Pauli operators: $X = \left(\begin{array}{ll}
0&1\\1&0\end{array}\right)$, $Z=\left(\begin{array}{ll}
1&0\\0&-1\end{array}\right)$ and $Y = iXZ$. The $l$-qubits Pauli group consists of all elements of the form
$P=P_1\otimes P_2\odots  P_l$ where $P_i \in \{\mcI,X,Y,Z\}$, together with the multiplicative factors $-1$ and $\pm i$. We will use a subset of this group, which we denote as $\mbP_l$, which includes all operators $P = P_1\otimes P_2\odots P_l$ but not the multiplicative factors. We will use the fact that Pauli operators anti commute; $ZX = - XZ$. The Pauli group $\mbP_l$ is a basis to the matrices acting on $l$ qubits. We can write any matrix $U$ over a vector space $A\otimes B$ (where $A$ is the space of $l$ qubits) as $\sum_{P\in \mbP_l}P\otimes U_P$ where $U_P$ is some (not necessarily unitary) matrix on $B$. 

Let $\mfC_l$ denote the $l$-qubit Clifford group. Recall that it is a finite subgroup of unitaries acting on $l$ qubits generated by the Hadamard matrix $H=\frac{1}{\sqrt{2}}\left(\begin{array}{ll}
1&1\\1&-1\end{array}\right)$, by $K=\left(\begin{array}{ll}
1&0\\0&i\end{array}\right)$, and by controlled-NOT (CNOT) which maps $\ket{a,b}$ to $\ket{a,a\oplus b}$ (for bits $a,b$). We will also use the controlled phase gate, $\hat{Z}$:
\begin{eqnarray}     
\hat{Z}\ket{a,b} &=& (-1)^{ab}\ket{a,b}\\
\hat{Z} &=& (\mcI\otimes H)CNOT(\mcI\otimes H)
\end{eqnarray}
The Clifford group is characterized by the property that it maps the
Pauli group $\mbP_l$ to itself, up to a phase $\alpha\in\{\pm 1,\pm i\}$. That is:
$\forall C\in\mfC_l ,  P\in \mbP_l: ~\alpha CPC^\dagger \in \mbP_l$

The Toffoli gate $T$ maps  $\ket{a,b,c}$ to $\ket{a,b,c\oplus ab}$ (for $a,b,c\in\{0,1\}$). We will use the fact that the Clifford group combined with the Toffoli gate constitute a universal gate set for quantum circuits.

\subsubsection{Toffoli Gate Application}\label{sec:toffoliapp}
A straightforward calculation shows that the Toffoli operator maps Pauli operators to Clifford operators in the following way:
\begin{eqnarray}
TZ^{z_1}X^{x_1}\otimes Z^{z_2}X^{x_2} \otimes Z^{z_3}X^{x_3} T^\dagger
&=& CNOT_{1,3}^{x_2} CNOT_{2,3}^{x_1} \hat{Z}_{1,2}^{z_3}Z^{z_1+x_2z_3}X^{x_1}\otimes Z^{z_2 + x_1z_3}X^{x_2} \otimes Z^{z_3}X^{x_1x_2 + x_3} \nonumber
\end{eqnarray}
We can split the above expression into Clifford and Pauli operators as follows:
\begin{eqnarray}
C_{zx} &=& CNOT_{1,3}^{x_2} CNOT_{2,3}^{x_1} \hat{Z}_{1,2}^{z_3}\nonumber\\
P_{zx} &=& Z^{z_1+x_2z_3}X^{x_1}\otimes Z^{z_2 + x_1z_3}X^{x_2} \otimes Z^{z_3}X^{x_1x_2 + x_3} 
\end{eqnarray}
We therefore obtain:
\begin{eqnarray}
TZ^{z_1}X^{x_1}\otimes Z^{z_2}X^{x_2} \otimes Z^{z_3}X^{x_3} T^\dagger &=& C_{zx}P_{zx}
\end{eqnarray}
Observe that $C_{zx}$ consists only of $\textrm{CNOT}$ gates and 2 Hadamard gates. Moreover, the three Clifford gates in $C_{zx}$ (the two CNOT operators and the controlled phase operator) commute. Also note that only the $\textrm{CNOT}$ gates are dependent on the Pauli keys.

\subsubsection{Pauli Mixing}\label{sec:paulimixingproof}
Here we restate Lemma \ref{paulimix} and prove it:

\begin{statement}{\textbf{Lemma \ref{paulimix}}}
\textit{For a matrix $\rho$ on two spaces $A,B$
$$
\frac{1}{2^{2l}}\sum_{z,x\in\{0,1\}^l} (Z^zX^x\otimes\mcI_B) \rho (Z^zX^x\otimes\mcI_B)^\dagger = \frac{1}{2^l}\mcI_A\otimes\tr_A(\rho)
$$}
\end{statement}
\begin{proofof}{\textbf{Lemma \ref{paulimix} }}
First, we write $\rho$ as:
$$
\sum_{ij}\ket{i}\bra{j}_A\otimes \rho_{ij}
$$
It follows that:
$$
\tr_A(\rho) = \sum_i\rho_{ii}
$$
Next, observe that:
\begin{eqnarray}
\sum_{z,x\in\{0,1\}^l} Z^zX^x\ket{i}\bra{j}(Z^zX^x)^\dagger &=& \sum_{x\in\{0,1\}^l}(\sum_{z\in\{0,1\}^l}(-1)^{z\cdot (i\oplus j)}) X^x\ket{i}\bra{j}(X^x)^\dagger\
\end{eqnarray}
This expression is 0 if $i\neq j$. If $i = j$, we obtain $2^l\mcI_A$. Plugging in this observation to the expression in the claim, we have:
\begin{eqnarray}
\frac{1}{2^{2l}}\sum_{z,x\in\{0,1\}^l} (Z^zX^x\otimes\mcI_B)\rho(Z^zX^x\otimes\mcI_B)^\dagger &=&
\frac{1}{2^{2l}}\sum_{ij}\sum_{z,x\in\{0,1\}^l} Z^zX^x\ket{i}\bra{j}_A (Z^zX^x)^\dagger\otimes \rho_{ij}\nonumber\\
&=& \frac{1}{2^{2l}}\sum_{i}\sum_{z,x\in\{0,1\}^l} Z^zX^x\ket{i}\bra{i}_A (Z^zX^x)^\dagger\otimes \rho_{ii}\nonumber\\
&=&\frac{1}{2^l}\mcI_A\otimes\tr_A(\rho)
\end{eqnarray}
\end{proofof}

\subsubsection{Trace Distance}\label{sec:prelimtrace}
For density matrices $\rho,\sigma$, the trace distance $\T{\rho}{\sigma}$ is equal to:
\begin{eqnarray}\label{eq:deftracedistance}
\T{\rho}{\sigma} = \frac{1}{2}\tr(\sqrt{(\rho - \sigma)^2})
\end{eqnarray}
We will use the following fact (\cite{tracedistance}):
\begin{equation}
\T{\rho}{\sigma} = \max_P \tr(P(\rho - \sigma))
\end{equation}
where the maximization is carried over all projectors $P$. We will also use the fact that the trace distance is contractive under completely positive trace preserving maps (\cite{tracedistance}). The following lemma relates the Hellinger distance as given in \eqref{eq:bhatt} and the trace distance of superpositions:
\begin{lem}\label{lem:hellingertotrace}
Let $X$ be a finite set and $f_1,f_2\in\mathcal{D}_X$. Let 
$$ \ket{\psi_1}=\sum_{x\in X}\sqrt{f_1(x)}\ket{x}\qquad\text{and}\qquad  \ket{\psi_2}=\sum_{x\in X}\sqrt{f_2(x)}\ket{x}\;.$$
 Then 
 $$\T{\ket{\psi_1}\bra{\psi_1}}{\ket{\psi_2}\bra{\psi_2}}\,=\, \sqrt{ 1 - (1-H^2(f_1,f_2))^2}\;.$$
\end{lem}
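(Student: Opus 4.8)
The plan is to reduce this to two standard ingredients: first, that the inner product $\braket{\psi_1}{\psi_2}$ is precisely $1-H^2(f_1,f_2)$, and second, that for any two pure states the trace distance equals $\sqrt{1-|\braket{\psi_1}{\psi_2}|^2}$. Combining these immediately yields the claimed identity.

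The first step is a direct computation. Since both $\sqrt{f_1(x)}$ and $\sqrt{f_2(x)}$ are real and nonnegative, the amplitudes are real, and
$$\braket{\psi_1}{\psi_2} \,=\, \sum_{x\in X}\sqrt{f_1(x)}\sqrt{f_2(x)} \,=\, \sum_{x\in X}\sqrt{f_1(x)f_2(x)} \,=\, 1-H^2(f_1,f_2)\;,$$
where the last equality is just the definition of the Hellinger distance in \eqref{eq:bhatt}. In particular $\braket{\psi_1}{\psi_2}$ is a real number lying in $[0,1]$, so $|\braket{\psi_1}{\psi_2}| = 1-H^2(f_1,f_2)$.

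For the second step I would establish the pure-state trace distance formula rather than cite it, since it follows quickly from the definition in \eqref{eq:deftracedistance}. Write $\rho = \ket{\psi_1}\bra{\psi_1}$ and $\sigma = \ket{\psi_2}\bra{\psi_2}$. The operator $\rho-\sigma$ is Hermitian, traceless, and supported on the (at most two-dimensional) subspace spanned by $\ket{\psi_1},\ket{\psi_2}$, so its nonzero eigenvalues are $\pm\lambda$ for a single $\lambda\geq 0$, and hence $\T{\rho}{\sigma} = \frac{1}{2}\tr\!\big(\sqrt{(\rho-\sigma)^2}\big) = \lambda$. To find $\lambda$ I would compute
$$\tr\!\big((\rho-\sigma)^2\big) \,=\, \tr(\rho^2)+\tr(\sigma^2)-2\tr(\rho\sigma) \,=\, 1 + 1 - 2|\braket{\psi_1}{\psi_2}|^2 \,=\, 2\big(1-|\braket{\psi_1}{\psi_2}|^2\big)\;,$$
using that $\rho,\sigma$ are pure (so $\tr(\rho^2)=\tr(\sigma^2)=1$) and $\tr(\rho\sigma)=|\braket{\psi_1}{\psi_2}|^2$. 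Since the same trace equals $2\lambda^2$, we get $\lambda = \sqrt{1-|\braket{\psi_1}{\psi_2}|^2}$.

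Substituting $|\braket{\psi_1}{\psi_2}| = 1-H^2(f_1,f_2)$ from the first step then gives exactly
$$\T{\ket{\psi_1}\bra{\psi_1}}{\ket{\psi_2}\bra{\psi_2}} \,=\, \sqrt{1-\big(1-H^2(f_1,f_2)\big)^2}\;,$$
as required. There is no real obstacle here: the only subtlety is confirming that the amplitudes are genuinely nonnegative reals (so that the inner product equals the Hellinger overlap with no absolute-value loss and no phase), after which both steps are routine. I would present the argument essentially in the two displays above.
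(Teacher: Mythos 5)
Your proof is correct, and it is the standard argument: the paper itself states this lemma without proof (it is imported as a known fact, with the trace-distance facts cited to \cite{tracedistance}), and the intended justification is exactly your two steps --- the overlap $\braket{\psi_1}{\psi_2}=\sum_x\sqrt{f_1(x)f_2(x)}=1-H^2(f_1,f_2)$, which is real and nonnegative since the amplitudes are, combined with the pure-state identity $\T{\ket{\psi_1}\bra{\psi_1}}{\ket{\psi_2}\bra{\psi_2}}=\sqrt{1-|\braket{\psi_1}{\psi_2}|^2}$. Your self-contained derivation of the latter (the difference $\rho-\sigma$ is Hermitian, traceless, and rank at most two, so its nonzero spectrum is $\{\pm\lambda\}$, and $\tr((\rho-\sigma)^2)=2(1-|\braket{\psi_1}{\psi_2}|^2)=2\lambda^2$) is clean and complete, so nothing is missing.
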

We require the following definition, which is analogous to Definition \ref{def:compinddist}:
\begin{deff}{\textbf{Computational Indistinguishability of Quantum States}}\label{def:compind}
Two families of density matrices $\{\rho_{0,\lambda}\}_{\lambda\in\mN}$ and $\{\rho_{1,\lambda}\}_{\lambda\in\mN}$ (indexed by the security parameter $\lambda$) are computationally indistinguishable if for all efficiently computable CPTP maps $\mathcal{S}$ there exists a negligible function $\mu(\cdot)$ such that for all $\lambda\in\mN$:
\begin{equation}
\Big| \tr((\ket{0}\bra{0}\otimes \mcI)\mathcal{S}(\rho_{0,\lambda} - \rho_{1,\lambda})\Big| \,\leq\, \mu(\lambda)\;.
\end{equation}
\end{deff}

\subsection{Homomorphic Encryption}\label{sec:homomorphicprelim}
The following definitions are modified versions of definitions from \cite{brakerski2011} and \cite{brakerski2013}. A homomorphic (public-key) encryption scheme HE = (HE.Keygen, HE.Enc, HE.Dec, HE.Eval) is a quadruple of PPT algoritms which operate as follow:
\begin{itemize}
\item \textbf{Key Generation.} The algorithm $(pk,evk,sk)\leftarrow$ HE.Keygen$(1^{\lambda})$ takes a unary representation of the security parameter and outputs a public key encryption key $pk$, a public evaluation key $evk$ and a secret decryption key $sk$.
\item \textbf{Encryption.} The algorithm $c\leftarrow$ HE.Enc$_{pk}(\mu)$ takes the public key $pk$ and a single bit message $\mu\in\{0,1\}$ and outputs a ciphertext $c$. The notation HE.Enc$_{pk}(\mu;r)$ will be used to represent the encryption of a bit $\mu$ using randomness $r$. 
\item \textbf{Decryption.} The algorithm $\mu^*\leftarrow$ HE.Dec$_{sk}(c)$ takes the secret key $sk$ and a ciphertext $c$ and outputs a message $\mu^*\in\{0,1\}$. 
\item \textbf{Homomorphic Evaluation} The algorithm $c_{f}\leftarrow$ HE.Eval$_{evk}(f,c_1,\ldots,c_l)$ takes the evaluation key $evk$, a function $f:\{0,1\}^l\rightarrow\{0,1\}$ and a set of $l$ ciphertexts $c_1,\ldots,c_l$, and outputs a ciphertext $c_f$. It must be the case that for all $c_1,\ldots,c_l$:
\begin{equation}
\mathrm{HE.Dec}_{sk}(c_f) = f(\mathrm{HE.Dec}_{sk}(c_1),\ldots,\mathrm{HE.Dec}_{sk}(c_l))
\end{equation}
with all but negligible probability in $\lambda$. 
\end{itemize}
A homomorphic encryption scheme is said to be secure if it meets the following notion of semantic security:
\begin{deff}[\bfseries CPA Security]\label{def:cpasecurity}
A scheme \textnormal{HE} is IND-CPA secure if, for any polynomial time adversary $\mathcal{A}$, there exists a negligible function $\mu(\cdot)$ such that 
\begin{equation}
\mathrm{Adv}_{\mathrm{CPA}}[\mathcal{A}] \EqDef  |\Pr[\mathcal{A}(pk,evk,\mathrm{HE.Enc}_{pk}(0)) = 1] - \Pr[\mathcal{A}(pk,evk,\mathrm{HE.Enc}_{pk}(1)) = 1]|=\mu(\lambda)
\end{equation}
where $(pk,evk,sk)\leftarrow$ \textnormal{HE.Keygen}($1^{\lambda}$). 
\end{deff}
Note that the above definitions are restricted to encryptions of classical strings, and to functions with classical input/output. The definitions can be extended to encryption of qubits and functions with quantum input/output in a straightforward manner (see \cite{BrakerskiQFHE} or \cite{bitanskyqma}). For the definition of semantic security for quantum encryptions, see \cite{broadbentjeffery2014}. The scheme presented in this paper does satisfy the extended definitions, though we restrict ourselves to the classical definitions above for simplicity.

We now define two desirable properties of homomorphic encryption schemes:
\begin{deff}[\bfseries Compactness and Full Homomorphism]\label{def:compact}
A homomorphic encryption scheme HE is compact if there exists a polynomial $s$ in $\lambda$ such that the output length of HE.Eval is at most $s$ bits long (regardless of $f$ or the number of inputs). A compact scheme is (pure) fully homomorphic if it can evaluate any efficiently computable boolean function. A compact scheme is leveled fully homomorphic if it takes $1^L$ as additional input in key generation, and can only evaluate depth $L$ Boolean circuits where $L$ is polynomial in $\lambda$. A scheme is quantum pure or leveled fully homomorphic if we refer to quantum circuits rather than boolean circuits. 
\end{deff}

\section{Quantum-Capable Classical Homomorphic Encryption Schemes}\label{sec:quantumcapablerequirements}
As described in Section \ref{sec:quantumcapableoverview}, a classical leveled fully homomorphic encryption scheme is quantum-capable if an \cnot\ can be applied with respect to any ciphertext which occurs during the computation. We begin by formalizing the notion of such a ciphertext. This definition is dependent on the depth parameter $L$ (see Definition \ref{def:compact}). Let $\mathcal{F}_L$ be the set of all functions which can be computed by circuits of depth $L$. Assume for convenience that all such functions have domain $\{0,1\}^l$ and range $\{0,1\}$.
\begin{deff}\label{def:ciphertextduringcomp}
For a classical leveled fully homomorphic encryption scheme HE, let $C_{\textrm{HE}}$ be the set of all ciphertexts which can occur during the computation:
\begin{equation}
    C_{\textrm{HE}} = \{\textrm{HE.Eval}_{evk}(f,\textrm{HE.Enc}_{pk}(\mu_1),\ldots,\textrm{HE.Enc}_{pk}(\mu_l)) |f\in\mathcal{F}_L,\mu_1,\ldots\mu_l\in\{0,1\}\}
\end{equation}
\end{deff}
We now define quantum-capable homomorphic encryption schemes:
\begin{deff}[Quantum-Capable Homomorphic Encryption Schemes]\label{def:quantumcapable}
Let $\lambda$ be the security parameter. Let HE be a classical leveled fully homomorphic encryption scheme. HE is \textit{quantum-capable} if there exists an encryption scheme AltHE such that the following conditions hold for all ciphertexts $c\in C_{\textrm{HE}}$. 
\begin{enumerate}
\item There exists an algorithm HE.Convert which on input $c$ produces an encryption $\hat{c}$ under AltHE, where both $c$ and $\hat{c}$ encrypt the same value. 
\item AltHE allows the XOR operation to be performed homomorphically. Moreover, the homomorphic XOR operation is efficiently invertible using only the public key of AltHE: given the public key of AltHE, a ciphertext $c_0$, and the ciphertext which results from applying the homomorphic XOR to input ciphertexts $c_0$ and $c_1$, it is possible to efficiently recover $c_1$.

\item There exists a distribution $D$ which may depend on the parameters of HE and satisfies the following conditions:
\begin{enumerate}
\item The Hellinger distance between the following two distributions is negligible in $\lambda$: 
\begin{equation}\label{eq:freshdist}
\{\textnormal{AltHE.Enc}_{pk}(\mu;r) | (\mu,r)\xleftarrow{\text{\textdollar}}  D\}
\end{equation}
and
\begin{equation}\label{eq:shiftdist}
\{\textnormal{AltHE.Enc}_{pk}(\mu;r) \oplus_H \hat{c} | (\mu,r) \xleftarrow{\text{\textdollar}} D\}
\end{equation}
where $\oplus_H$ represents the homomorphic XOR operation. 
\item It is possible for a \BQP\ server to create the following superposition given access to the public key $pk$: 
\begin{equation}
\sum_{\mu\in\{0,1\},r} \sqrt{D(\mu,r)}\ket{\mu,r}
\end{equation}
\item Given $y = \textnormal{AltHE.Enc}_{pk}(\mu_0;r_0)$ where $(\mu_0,r_0)$ is sampled from $D$, it must be possible to compute $\mu_0, r_0$ given the secret key and possibly additional trapdoor information (which can be computed as part of the key generation procedure). 
\end{enumerate}
\end{enumerate}
\end{deff}
For convenience, we will assume that AltHE and HE have the same public/secret key; this is the case in the example quantum-capable scheme we provide in Section \ref{sec:quantumcapableexample} and also simplifies Section \ref{sec:extensiontohomomorphic}. However, this assumption is not necessary. 

We prove the following claim, which formalizes the \cnot\ given in Section \ref{sec:descofhiddensum}:
\begin{claim}\label{cl:sumcorrectness}
Let HE be a quantum-capable homomorphic encryption scheme and let $c\in C_{\textrm{HE}}$ be a ciphertext encrypting a bit $s$. Consider a \BQP\ machine with access to $c$ and a state $\ket{\psi}$ on two qubits. The \BQP\ machine can compute a ciphertext $y = $ AltHE.Enc$_{pk}(\mu_0,r_0)$, a string $d$ and a state within negligible trace distance of the following ideal state: 
\begin{equation}
(Z^{d\cdot ((\mu_0,r_0)\oplus (\mu_1,r_1))}\otimes X^{\mu_0})CNOT_{1,2}^s\ket{\psi}
\end{equation}
for ($\oplus_H$ is the homomorphic XOR operation) 
\begin{equation}\label{eq:defmuone}
\textnormal{AltHE.Enc}_{pk}(\mu_0,r_0) = \textnormal{AltHE.Enc}_{pk}(\mu_1,r_1) \oplus_H \textnormal{HE.Convert}(c)
\end{equation}
\end{claim}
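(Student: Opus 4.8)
The plan is to implement the procedure of Section~\ref{sec:descofhiddensum} using the scheme AltHE guaranteed by quantum capability, and then to replace each idealized step by its realizable counterpart while tracking the error. First I would apply HE.Convert to $c$ (condition 1 of Definition~\ref{def:quantumcapable}) to obtain $\hat{c}$, an AltHE-encryption of the same bit $s$, and define the two functions $f_0(\mu,r)=\textnormal{AltHE.Enc}_{pk}(\mu;r)$ and $f_1(\mu,r)=f_0(\mu,r)\oplus_H\hat{c}$. Both are \BQP-computable from the public key, using that the homomorphic XOR is efficient and efficiently invertible (condition 2); invertibility also makes $f_1$ the composition of $f_0$ with a bijection, so $f_1$ is injective wherever $f_0$ is. The randomness-recovery property (condition 3(c)) shows $f_0$ is injective on $\supp(D)$, so a value $y$ determines unique preimages $(\mu_0,r_0)=f_0^{-1}(y)$ and $(\mu_1,r_1)=f_1^{-1}(y)$ whenever they exist. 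For such $y$ the identity $f_1(\mu_1,r_1)=f_0(\mu_0,r_0)=y$ is exactly relation~\eqref{eq:defmuone}, and decrypting it, using that $\hat{c}$ encrypts $s$ under the homomorphic XOR, yields $\mu_0\oplus\mu_1=s$.

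Next I would write out an idealized run and verify it produces the target state exactly. Using condition 3(b) the server prepares $\sum_{\mu,r}\sqrt{D(\mu,r)}\ket{\mu,r}$, tensors it with $\ket{\psi}=\sum_{a,b}\alpha_{ab}\ket{a,b}$, and coherently computes $f_a$ (selected by the first qubit) into a fresh register, giving $\ket{\Phi}=\sum_{a,b}\alpha_{ab}\ket{a,b}\ket{\theta_a}$ with $\ket{\theta_a}=\sum_{\mu,r}\sqrt{D(\mu,r)}\ket{\mu,r}\ket{f_a(\mu,r)}$, which is~\eqref{eq:overviewsuperpositionoverD}. Measuring the image register yields $y$; in the ideal case where $y$ has preimages in both branches with equal weights $D(\mu_0,r_0)=D(\mu_1,r_1)$, the state collapses to $\sum_{a,b}\alpha_{ab}\ket{a,b}\ket{\mu_a,r_a}$ as in~\eqref{eq:statecollapseoverview}. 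XORing the $\mu$-register into the second qubit and using $\mu_a=\mu_0\oplus a\cdot s$ produces $\sum_{a,b}\alpha_{ab}(\mcI\otimes X^{\mu_0})CNOT_{1,2}^s\ket{a,b}\otimes\ket{\mu_a,r_a}$. Finally a Hadamard transform on the $(\mu,r)$-register followed by a measurement giving $d$ attaches the phase $(-1)^{d\cdot(\mu_a,r_a)}$ to branch $a$; dropping the global phase $(-1)^{d\cdot(\mu_0,r_0)}$, the residual relative phase on the $a=1$ branch is exactly $Z^{d\cdot((\mu_0,r_0)\oplus(\mu_1,r_1))}$ on the first qubit. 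This reproduces the ideal state of the claim, with $y,d$ the two measurement outcomes.

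The main obstacle is the approximation step, since the image distributions $f_0(D)$ and $f_1(D)$ are only statistically close, not equal — this is precisely condition 3(a). Two error sources appear: outcomes $y$ lying in the image of only one of $f_0,f_1$ (which would collapse the superposition over $a$), and, for shared $y$, an imbalance between $D(\mu_0,r_0)$ and $D(\mu_1,r_1)$. I would control both at once by comparing $\ket{\Phi}$ to an ideal pre-measurement state $\ket{\Phi_{\mathrm{id}}}$ whose two branches share a common image marginal — e.g.\ one that keeps the $f_1$-preimage in the $(\mu,r)$-register but replaces its image marginal $f_1(D)$ by $f_0(D)$ — so that the run above is exact on $\ket{\Phi_{\mathrm{id}}}$. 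Using injectivity of $f_0,f_1$ (condition 3(c)) to regard each preimage register as a deterministic function of $y$, the quantity $\T{\ket{\Phi}\bra{\Phi}}{\ket{\Phi_{\mathrm{id}}}\bra{\Phi_{\mathrm{id}}}}$ reduces to the trace distance between $\sum_y\sqrt{f_0(D)(y)}\ket{y}$ and $\sum_y\sqrt{f_1(D)(y)}\ket{y}$, which by \Le{lem:hellingertotrace} equals $\sqrt{1-(1-H^2(f_0(D),f_1(D)))^2}$ and is negligible by condition 3(a). Because measuring $y$, XORing, applying the Hadamard transform, and measuring $d$ constitute a CPTP map, contractivity of the trace distance propagates this negligible bound to the final state. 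The delicate point is defining $\ket{\Phi_{\mathrm{id}}}$ so that the reduction to the $y$-marginals is clean even though the two branches carry different preimage registers; the negligible-weight set of $y$ lacking an $f_1$-preimage is exactly what is absorbed by the Hellinger bound, and injectivity is what makes the reduction exact.
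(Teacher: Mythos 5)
Your proposal is correct and takes essentially the same route as the paper's own proof: your hybrid $\ket{\Phi_{\mathrm{id}}}$ (keeping the $f_1$-preimage register as a function of $y$ while replacing the image marginal $f_1(D)$ by $f_0(D)$ on the $a=1$ branch) is precisely the paper's intermediate state \eqref{eq:stateindfixalemma}, and your appeal to Lemma~\ref{lem:hellingertotrace} applied to the distributions \eqref{eq:freshdist} and \eqref{eq:shiftdist}, followed by contractivity of trace distance under the remaining measurement/XOR/Hadamard steps, is exactly how the paper concludes. The only difference is presentational: you make explicit the injectivity-based reduction to the $y$-marginals and the final CPTP contractivity step, which the paper leaves implicit.
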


\begin{proofof}{\textbf{Claim \ref{cl:sumcorrectness}}}
The server first computes $\hat{c} = \textnormal{HE.Convert}(c)$. He then applies the \cnot, as described in Section \ref{sec:descofhiddensum}. Recall that in Section \ref{sec:descofhiddensum}, we used $f_0$ to denote the encryption function of AltHE and $f_1$ to denote the shift of $f_0$ by the homomorphic XOR (which we denote as $\oplus_H$) of $\hat{c}$. At the stage of \eqref{eq:overviewsuperpositionoverD}, the server holds the following state:
\begin{equation}
    \sum_{a,b,\mu\in\{0,1\},r}\alpha_{ab}\sqrt{D(\mu,r)}\ket{a}\ket{b}\ket{\mu,r}\ket{f_a(\mu,r)} 
\end{equation}
\begin{equation}\label{eq:statindclaim1}  
= \sum_{a,b,\mu\in\{0,1\},r}\alpha_{ab}\sqrt{D(\mu,r)}\ket{a}\ket{b}\ket{\mu,r} \ket{\textrm{AltHE.Enc}_{pk}(\mu;r) \oplus_H a\cdot \hat{c}}
\end{equation}
Fix $a = 1$ and $b$ and consider the resulting state: 
\begin{equation}\label{eq:stateindfixa}
\sum_{\mu_0\in\{0,1\},r_0} \sqrt{D(\mu_1,r_1)}\ket{1,b}\ket{\mu_1,r_1}\ket{\textrm{AltHE.Enc}_{pk}(\mu_0;r_0)}
\end{equation}
where $\mu_1,r_1$ are defined with respect to $\mu_0,r_0$ and $\hat{c}$ as in \eqref{eq:defmuone}. We can now directly apply Lemma \ref{lem:hellingertotrace} with reference to the distributions in \eqref{eq:freshdist} and \eqref{eq:shiftdist} (the latter corresponds to \eqref{eq:stateindfixa}). Since the distributions in \eqref{eq:freshdist} and \eqref{eq:shiftdist} are negligibly close, we obtain that the following state is within negligible trace distance of \eqref{eq:stateindfixa}:
\begin{equation}\label{eq:stateindfixalemma}
\sum_{\mu_0\in\{0,1\},r_0} \sqrt{D(\mu_0,r_0)}\ket{1,b}\ket{\mu_1,r_1}\ket{\textrm{AltHE.Enc}_{pk}(\mu_0;r_0)}
\end{equation}
It follows immediately that the state in \eqref{eq:statindclaim1} is within negligible trace distance of the following state: 
\begin{equation}\label{eq:statindclaim2}
\sum_{\mu_0\in\{0,1\},r_0}\sum_{a,b\in\{0,1\}} \alpha_{ab}\sqrt{D(\mu_0,r_0)}\ket{a,b}\ket{\mu_a,r_a}\ket{\textrm{AltHE.Enc}_{pk}(\mu_0;r_0)}
\end{equation}
Observe that, when measured, the state in \eqref{eq:statindclaim2} collapses exactly to the state in \eqref{eq:statecollapseoverview}. The statement of \Cl{cl:sumcorrectness} follows.

\end{proofof}

\section{Example of a Quantum-Capable Classical Encryption Scheme}\label{sec:quantumcapableexample}
This section is dedicated to showing that the dual of the fully homomorphic encryption scheme in \cite{fhelwe} is quantum-capable. We begin by presenting a scheme called Dual in Section \ref{sec:dualscheme}, which is the dual of the encryption scheme from \cite{regev2005}. In Section \ref{sec:dualhescheme}, we use the framework of \cite{fhelwe} to extend Dual to a scheme called DualHE, which we prove in Theorem \ref{thm:dualhehomomorphic} is a classical leveled fully homomorphic encryption scheme. Finally, in Section \ref{sec:dualquantumcapable} (Theorem \ref{thm:dualhequantumcapable}), we prove that DualHE is quantum-capable . 

We begin by listing our initial parameters. Let $\lambda$ be the security parameter. All other parameters are functions of $\lambda$. Let $q\geq 2$ be a power of 2. Let $n,m\geq 1$ be polynomially bounded functions of $\lambda$, let $N = (m+1)\log q$ and let $\beta_{init}$ be a positive integer such that the following conditions hold: 
\begin{equation}\label{eq:assumptionsinitial}
    \begin{minipage}{0.9\textwidth}
\begin{enumerate}
\item  $m = \Omega(n \log q)$ ,
\item $ 2\sqrt{n} \leq \beta_{init}$
\end{enumerate}
    \end{minipage}
  \end{equation}

\subsection{Dual Encryption Scheme}
We first describe the dual scheme of \cite{regev2005}. This scheme was originally given in \cite{gentry2008}, but the presentation below is taken from Section 5.2.2 in \cite{peikertsurvey}. This scheme will eventually serve as the scheme AltHE in Definition \ref{def:quantumcapable}. 
\begin{scheme}\label{sec:dualscheme}{\textbf{Dual Encryption Scheme \cite{gentry2008}}}
\begin{itemize}
\item Dual.KeyGen: Choose $\*e_{sk}\in \{0,1\}^m$ uniformly at random. Using the procedure \newline
$\GenTrap(1^n,1^m,q)$ from Theorem \ref{thm:trapdoor}, sample a random trapdoor matrix $\*A\in \mZ_q^{m\times n}$, together with the trapdoor information $t_{\*A}$. The secret key is $\*{sk} = (-\*e_{sk},1)\in\mZ_q^{m + 1}$ and the trapdoor is $t_{\*A}$. The public key is $\*A'\in \mZ_q^{(m+ 1)\times n}$, which is the matrix composed of $\*A$ (the first $m$ rows) and $\*A^T\*e_{sk}\bmod q$ (the last row).  
\item Dual.Enc$_{pk}(\mu)$: To encrypt a bit $\mu\in\{0,1\}$, choose $\*s\in\mZ_q^{n}$ uniformly and create $\*e\in \mZ_q^{m + 1}$ by sampling each entry from $D_{\mZ_q,\beta_{init}}$. Output $\*A'\*s + \*e + (0,\ldots,0,\mu\cdot \frac{q}{2})\in\mZ_q^{m + 1}$.
\item Dual.Dec$_{sk}(\*c)$: To decrypt, compute $b' = \*{sk}^T\*c \in \mZ_q$. Output 0 if $b'$ is closer to 0 than to $\frac{q}{2}\bmod q$, otherwise output 1. 

\end{itemize}
\end{scheme}
We make a few observations:
\begin{itemize}
\item For a ciphertext $c$ with error $\*e$ such that $\lVert\*e\rVert < \frac{q}{4\sqrt{m+1}}$, the decryption procedure will operate correctly (since $\*{sk}^T\*A' = \*0$). 
\item The trapdoor $t_{\*A}$ can be used to recover the randomness $\*s,\*e$ from a ciphertext. To see this, note that the first $m$ entries of the ciphertext can be written as $\*A\*s + \*e'$, where $\*e'\in \mZ_q^{m}$. Therefore, the inversion algorithm \Invert\ in Theorem \ref{thm:trapdoor} outputs $\*s,\*e$ on input $\*A\*s + \*e'$ and $t_{\*A}$ as long as $\lVert \*e' \rVert < \frac{q}{C_T\sqrt{n\log q}}$ for $C_T$ the universal constant in Theorem~\ref{thm:trapdoor}. 
\item This scheme is naturally additively homomorphic; adding two ciphertexts encrypting $\mu_0$ and $\mu_1$ results in a ciphertext encrypting $\mu_0\oplus \mu_1$. 
\end{itemize}

\subsection{Leveled Fully Homomorphic Encryption Scheme from Dual}\label{sec:dualheschemesec}
We can extend the scheme Dual into a leveled fully homomorphic encryption scheme DualHE in the same way that the standard LWE scheme from \cite{regev2005} is extended in \cite{fhelwe}. Namely, we map the ciphertexts to matrices and encrypt the bit $\mu$ in a matrix (the key generation procedure remains the same). We begin with the required preliminaries and then describe the scheme DualHE. Next, we prove a property of DualHE which is crucial for quantum capability: the ciphertexts retain the same form throughout the computation. Finally, we show in Theorem \ref{thm:dualhehomomorphic} that for a strengthened version of the parameters in \eqref{eq:assumptionsinitial}, DualHE a leveled fully homomorphic encryption scheme. 

To describe this scheme, we will require two operations used in \cite{fhelwe}. The first is the linear operator $\*G\in \mZ_q^{(m + 1)\times N}$ ($N = (m+1)\log_2 q$), which converts a binary representation back to the original representation in $\mZ_q^{m + 1}$. More precisely, consider the $N$ dimensional vector $\*a = (a_{1,0},\ldots,a_{1,l-1},\ldots,a_{m+1,0},\ldots,a_{m+1,l-1})$, where $l = \log_2 q$. $\*G$ performs the following mapping: 
\begin{equation}
\*G(\*a) =(\sum\limits_{j=0}^{\log_2 q - 1} 2^j\cdot a_{1,j},\ldots,\sum \limits_{j=0}^{ \log_2 q - 1}2^j\cdot a_{m+1,j})
\end{equation}
Observe that $\*G$ is well defined even if $\*a$ is not a 0/1 vector. We will call the non linear inverse operation $G^{-1}$, which converts $\*a\in \mZ_q^{m + 1}$ to its binary representation (a vector in $\mZ_2^{N}$). $G^{-1}$ can also be applied to a matrix by converting each column. Note that $\*GG^{-1}$ is the identity operation. In terms of homomorphic evaluation, we will only consider the NAND gate, since we are only concerned with applying Boolean circuits. The scheme can be extended to arithmetic circuits over $\mZ_q$, as described in further detail in \cite{fhelwe}. 

The description of the scheme given below is derived from talks (\cite{fhetalk}, \cite{fhecommitmenttalk}) describing the scheme in \cite{fhelwe}. It is equivalent to the description given in \cite{fhelwe}, but is more convenient for our purposes. 
\begin{scheme}{\textbf{DualHE: Classical Leveled FHE Scheme from Dual}}\label{sec:dualhescheme}
\begin{itemize}
\item DualHE.KeyGen: This procedure is the same as Dual.KeyGen.
\item DualHE.Enc$_{pk}(\mu)$: To encrypt a bit $\mu\in\{0,1\}$, choose $\*S\in\mZ_q^{n\times N}$ uniformly at random and create $\*E\in\mZ_q^{(m + 1)\times N}$ by sampling each entry from $D_{\mZ_q,\beta_{init}}$. Output $\*A'\*S + \*E + \mu \*G\in \mZ_q^{(m + 1)\times N}$.
\item DualHE.Eval$(\*C_0,\*C_1)$: To apply the NAND gate, on input $\*C_0,\*C_1$ output $\*G - \*C_0\cdot G^{-1}(\*C_1)$. 
\end{itemize}
For quantum capability, we will also require the following algorithm, which converts a ciphertext under DualHE to a ciphertext under Dual:
\begin{itemize}
\item DualHE.Convert($\*C$): Output column $N$ of $\*C$
\end{itemize}
Given this algorithm, we can state decryption in terms of Dual\footnote{This is equivalent to the decryption algorithm of \cite{fhelwe}, which is as follows. Let $\*u = (0,\ldots,0,1)\in \mZ_q^{m + 1}$. To decrypt, compute $b' = \*{sk}^T\*C G^{-1}(\frac{q}{2}\*u)$. Output 0 if $b'$ is closer to 0 than to $\frac{q}{2}\bmod q$, otherwise output 1.}:
\begin{itemize}
\item DualHE.Dec$_{sk}(\*C)$: Output Dual.Dec$_{sk}(\textrm{DualHE.Convert}(\*C))$
\end{itemize}
\end{scheme}

\subsubsection{Ciphertext Form}\label{sec:ciphertextform}
We will rely on the fact that, throughout the computation of a Boolean circuit of depth $L$, a ciphertext encrypting a bit $\mu$ can be written in the following form:
\begin{equation}
\*A'\*S + \*E + \mu \*G
\end{equation}
where $\lVert \*E\rVert_{\infty} \leq \beta_{init}(N+1)^L$. This is clearly the structure of the ciphertext immediately after encryption and we now show that this ciphertext structure is maintained after the NAND operation. The NAND operation is performed by computing:
\begin{equation}
\*G - \*C_0\cdot G^{-1}(\*C_1)
\end{equation}
Assume the ciphertexts we begin with are $\*C_b = \*A'\*S_b + \*E_b + \mu_b\*G$ for $b\in \{0,1\}$. Using the fact that $\*G G^{-1}$ is the identity operation, it is easy to see that the result of the NAND operation is: 
\begin{eqnarray}
\*A'\*S' + \*E' + (1- \mu_0\mu_1)\*G
\end{eqnarray}
for
\begin{eqnarray}
\*S' &=& -\*S_0\cdot G^{-1}(\*C_1) - \mu_0 \*S_1\\
\*E' &=& - \*E_0\cdot G^{-1}(\*C_1) - \mu_0 \*E_1
\end{eqnarray}
Note that if both $\lVert \*E_0 \rVert_{\infty}$ and $\lVert \*E_1 \rVert_{\infty}$ are at most $\beta$, then $\lVert \*E' \rVert_{\infty} \leq \beta(N+1)$. It follows that if the scheme DualHE is used to compute a circuit of depth $L$, $\lVert \*E \rVert_{\infty} \leq \beta_{init}(N+1)^L$ for all ciphertexts throughout the computation.

\subsubsection{Encryption Conversion}\label{sec:encryptionconversion}
We now use the above property to prove the correctness of DualHE.Convert. Assume we begin with a ciphertext $\*C = \*A'\*S + \*E + \mu \*G$ under DualHE. The ciphertext $\*c$ (under Dual) will be column $N$ of $\*C$. To see why this is correct, first note that all individual columns of $\*A'\*S + \*E$ are of the form $\*A'\*s + \*e$. Second, observe that column $N$ of $\mu \*G$ is equal to $(0,\ldots,0,\mu\cdot \frac{q}{2})$.

\subsubsection{Proof of Correctness and Security of Scheme \ref{sec:dualhescheme}}
The above two sections allow us to easily prove the following theorem:
\begin{thm}\label{thm:dualhehomomorphic}
Let $\lambda$ be the security parameter. There exists a function $\eta_c$ which is logarithmic in $\lambda$ such that DualHE is IND-CPA secure and leveled fully homomorphic under the hardness assumption $\textrm{LWE}_{n,q,D_{\mZ_q,\beta_{init}}}$ if the conditions in \eqref{eq:assumptionsinitial} as well as the following condition are satisfied:
\begin{equation}\label{eq:betadualhehomomorphic}
    \beta_{init}(N+1)^{\eta_c} < \frac{q}{4(m+1)}.
\end{equation}
\end{thm}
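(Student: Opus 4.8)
The plan is to prove the two assertions --- IND-CPA security and leveled full homomorphism --- separately, drawing the security from the $\lwe$ assumption and the correctness almost entirely from the ciphertext-form and conversion analyses already carried out in \Sec{sec:ciphertextform} and \Sec{sec:encryptionconversion}.

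For IND-CPA security I would run a short hybrid argument reducing to $\lwe_{n,q,D_{\mZ_q,\beta_{init}}}$. First I would argue that the public key $\*{A'}$ is statistically close to uniform: by \Th{thm:trapdoor} the matrix $\*A$ output by $\GenTrap$ is negligibly close to uniform, and the appended row $\*A^T\*e_{sk}$ is the image of the high-entropy secret $\*e_{sk}\in\{0,1\}^m$ under the hash $\*e_{sk}\mapsto\*A^T\*e_{sk}$; since condition~1 of \eqref{eq:assumptionsinitial} gives $m=\Omega(n\log q)$, the leftover hash lemma makes $(\*A,\*A^T\*e_{sk})$, and hence $\*{A'}$, statistically close to uniform. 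Having replaced $\*{A'}$ by a uniform matrix, the ciphertext $\*{A'}\*S+\*E$ is a matrix $\lwe$ instance whose $N$ columns are independent samples, so a hybrid over the columns shows $(\*{A'},\*{A'}\*S+\*E)$ is computationally indistinguishable from $(\*{A'},\*U)$ for uniform $\*U$. Finally, adding $\mu\*G$ to a uniform $\*U$ leaves the distribution uniform and thus independent of $\mu$, so $\textrm{DualHE.Enc}_{pk}(0)$ and $\textrm{DualHE.Enc}_{pk}(1)$ are both indistinguishable from uniform, giving IND-CPA security. There is no separate evaluation key to account for, and the trapdoor $t_{\*A}$ is never revealed.

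For leveled full homomorphism I would combine three ingredients. Gate correctness: the computation in \Sec{sec:ciphertextform} already shows $\*G-\*C_0 G^{-1}(\*C_1)$ has the form $\*{A'}\*S'+\*E'+(1-\mu_0\mu_1)\*G$, so it encrypts $\textrm{NAND}(\mu_0,\mu_1)$, and NAND is universal. Noise control: the same computation, together with the fact that $G^{-1}(\*C_1)$ is a $0/1$ matrix with $N$ rows, gives $\lVert\*E'\rVert_\infty\leq(N+1)\max(\lVert\*E_0\rVert_\infty,\lVert\*E_1\rVert_\infty)$, hence $\lVert\*E\rVert_\infty\leq\beta_{init}(N+1)^L$ after a depth-$L$ circuit. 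Decryption: by \Sec{sec:encryptionconversion}, $\textrm{DualHE.Convert}$ extracts column $N$, a Dual ciphertext whose error $\*e_N$ satisfies $\lVert\*e_N\rVert_\infty\leq\lVert\*E\rVert_\infty$; the first observation following Scheme~\ref{sec:dualscheme} shows Dual decrypts correctly whenever $\lVert\*e_N\rVert<q/(4\sqrt{m+1})$, which $\lVert\*e_N\rVert_\infty<q/(4(m+1))$ guarantees. Setting $\eta_c=L$, the hypothesis \eqref{eq:betadualhehomomorphic} reads exactly $\beta_{init}(N+1)^L<q/(4(m+1))$, so decryption succeeds for every ciphertext arising in a depth-$L$ computation. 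Compactness is immediate since every ciphertext is an $(m+1)\times N$ matrix over $\mZ_q$, of size independent of the circuit.

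The step I expect to require the most care is balancing $\eta_c$ against the modulus rather than any single calculation. Because the noise grows by a factor $(N+1)$ per level, \eqref{eq:betadualhehomomorphic} forces $q>\beta_{init}(N+1)^{\eta_c}$; to keep the underlying $\lwe$ assumption in the superpolynomial- (rather than exponential-) noise-ratio regime of \Sec{sec:lweprelim}, the depth bound $\eta_c$ must be taken logarithmic in $\lambda$, which is precisely the regime in which DualHE will be used (one level's worth of Pauli-key updates). I would therefore fix $\eta_c$ to be this logarithmic bound, check that a quasi-polynomial $q$ then satisfies \eqref{eq:betadualhehomomorphic} while conditions~1--2 of \eqref{eq:assumptionsinitial} still hold, and confirm that the resulting $\lwe_{n,q,D_{\mZ_q,\beta_{init}}}$ instance is covered by the hardness assumption invoked above.
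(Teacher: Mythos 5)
Your IND-CPA argument matches the paper's: the leftover hash lemma (using $m=\Omega(n\log q)$ from \eqref{eq:assumptionsinitial} and the near-uniformity of $\*A$ from \Th{thm:trapdoor}) makes $\*A'$ statistically close to uniform, after which $\lwe_{n,q,D_{\mZ_q,\beta_{init}}}$ makes ciphertexts pseudorandom; likewise your noise-growth bound $\lVert\*E'\rVert_\infty\leq(N+1)\max(\lVert\*E_0\rVert_\infty,\lVert\*E_1\rVert_\infty)$ and the decryption condition $\lVert\*E\rVert_\infty<\frac{q}{4(m+1)}$ agree with Sections \ref{sec:ciphertextform} and \ref{sec:encryptionconversion}. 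The gap is in the homomorphism claim. By Definition \ref{def:compact}, ``leveled fully homomorphic'' means the scheme takes $1^L$ at key generation and correctly evaluates depth-$L$ circuits for \emph{any polynomial} $L$. Your choice $\eta_c=L$, combined with your own observation that \eqref{eq:betadualhehomomorphic} forces $\eta_c$ to be logarithmic in $\lambda$ (to keep the noise ratio superpolynomial rather than exponential), only yields correctness for circuits of logarithmic depth. That is a somewhat-homomorphic scheme, not a leveled FHE, and appealing to the intended quantum application (``one level's worth of Pauli-key updates'') does not rescue the theorem as stated, since the statement concerns DualHE as a classical leveled FHE independent of its later use.

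The missing idea is bootstrapping. In the paper, $\eta_c$ is not the target depth $L$ but a bound on the depth of DualHE's \emph{decryption circuit}, which is logarithmic in $\lambda$ --- this is exactly why the theorem can promise a logarithmic $\eta_c$ while still covering polynomial $L$. Condition \eqref{eq:betadualhehomomorphic} guarantees correct decryption after evaluating any circuit of depth $\eta_c$, i.e.\ the scheme can homomorphically evaluate its own decryption circuit plus a further gate; the standard bootstrapping technique of Section 4.1 of \cite{homomorphic} (in its leveled, key-chain form, with each secret key encrypted under the next public key so that no circular-security assumption is needed) then refreshes the noise after each depth-$\eta_c$ block and lifts correctness to depth-$L$ circuits for every polynomial $L$. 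With that substitution --- $\eta_c$ bounding decryption depth rather than circuit depth, plus an invocation of bootstrapping --- the remainder of your calculations go through unchanged.
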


\begin{proof}
We prove that DualHE is IND-CPA secure by relying on the hardness of LWE$_{n,m,q,D_{\mZ_q,\beta_{init}}}$ (i.e. the hardness of LWE with a superpolynomial noise ratio), which implies that the ciphertext is computationally indistinguishable from a uniform string. We can use this LWE assumption as long as the public key $\*A'$ is statistically indistinguishable from a uniformly random matrix (see Section \ref{sec:lweprelim}). Since $\*A$ is selected from a distribution which is statistically indistinguishable from the uniform distribution and $m = \Omega(n\log q)$, $\*A'$ is statistically indistinguishable from uniform due to the leftover hash lemma in \cite{leftoverhash} (see \cite{regev2005} or \cite{peikertsurvey} for more details).  

We now show that DualHE is leveled fully homomorphic. From Section \ref{sec:ciphertextform}, it is clear that the evaluation of the NAND operation is correct and that DualHE is compact. Let $\eta_c$ be larger than the depth of the decryption circuit of DualHE, which is logarithmic in $\lambda$. If we show that the decryption procedure operates correctly after evaluation of a circuit of depth $\eta_c$, the standard bootstrapping technique\footnote{A pure fully homomorphic encryption scheme can be obtained by assuming circular security. A leveled fully homomorphic encryption scheme can be obtained by producing a string of public and secret keys and encrypting each secret key under the next public key - see Section 4.1 of \cite{homomorphic}.} of \cite{homomorphic} can be used to turn DualHE into a leveled fully homomorphic encryption scheme. Due to Section \ref{sec:ciphertextform}, we can assume a ciphertext resulting from a circuit of depth $\eta_c$ can be written as $\*A'\*S + \*E + \mu \*G$, where $\lVert \*E \rVert_{\infty} \leq \beta_{init}(N+1)^{\eta_c}$. It is easy to check that the decryption procedure operates correctly as long as 
\begin{equation}\label{eq:initialcondition0}
\lVert \*E \rVert_{\infty}< \frac{q}{4(m+1)}
\end{equation} 
The condition in \eqref{eq:initialcondition0} is implied by the condition in \eqref{eq:betadualhehomomorphic}.

\end{proof}

\subsection{Quantum Capability of DualHE}\label{sec:dualquantumcapable}
We now prove the following theorem:
\begin{thm}\label{thm:dualhequantumcapable}

Let $\lambda$ be the security parameter, let $\eta_c$ be the logarithmic function in Theorem \ref{thm:dualhehomomorphic}, and let $\eta$ be an arbitrary logarithmic function in $\lambda$. Assume the choice of parameters satisfies the conditions in \eqref{eq:assumptionsinitial} as well as the following condition:
\begin{equation}\label{eq:betaquantumcapable}
    \beta_{init}(N+1)^{\eta+\eta_c} < \frac{q}{4(m+1)}.
\end{equation}
Under the hardness assumption of $\textrm{LWE}_{n,q,D_{\mZ_q,\beta_{init}}}$, the scheme DualHE is quantum-capable. 
\end{thm}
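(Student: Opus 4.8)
The plan is to instantiate AltHE with the scheme Dual and HE.Convert with DualHE.Convert, and then to verify the three conditions of Definition~\ref{def:quantumcapable} in turn, with the third one (the existence of the smudging distribution $D$) being the crux. Throughout I assume the parameter conditions \eqref{eq:assumptionsinitial} and \eqref{eq:betaquantumcapable} together with the LWE assumption, so that by Theorem~\ref{thm:dualhehomomorphic} DualHE is a bona fide classical leveled fully homomorphic encryption scheme (making quantum capability well posed) and so that the trapdoor guarantees of Theorem~\ref{thm:trapdoor} apply to the public key $\*A'$.

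Condition 1 is immediate from Section~\ref{sec:encryptionconversion}: for any $c=\*C\in C_{\textrm{DualHE}}$ written in the ciphertext form $\*A'\*S+\*E+\mu\*G$ of Section~\ref{sec:ciphertextform}, column $N$ of $\*C$ is a Dual encryption $\*A'\*s+\*e+(0,\ldots,0,\mu\cdot\frac{q}{2})$ of the same bit $\mu$, so DualHE.Convert produces a valid $\hat c$ under Dual. Condition 2 holds because Dual is additively homomorphic over $\mZ_q^{m+1}$ (the third observation following Scheme~\ref{sec:dualscheme}); the homomorphic XOR sends a ciphertext to its sum with $\hat c$, and since $\hat c$ is public this is inverted by subtracting $\hat c$, using only the public key.

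For condition 3 I take $D$ to sample $\mu\leftarrow_U\{0,1\}$, $\*s\leftarrow_U\mZ_q^n$ (a uniform superposition being easy since $q$ is a power of $2$), and error $\*e\leftarrow D_{\mZ_q^{m+1},B}$ for the Gaussian width $B=\beta_{init}(N+1)^{\eta+\eta_c}$, so that $\textnormal{Dual.Enc}_{pk}(\mu;(\*s,\*e))=\*A'\*s+\*e+(0,\ldots,0,\mu\frac{q}{2})$. Condition 3(b) then follows because $\mu$ and $\*s$ are uniform (preparable by Hadamards) while the discrete Gaussian superposition $\sum_{\*e}\sqrt{D_{\mZ_q^{m+1},B}(\*e)}\ket{\*e}$ is efficiently preparable by standard techniques. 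Condition 3(c) follows from the trapdoor observations for Dual: the first $m$ coordinates of $y=\textnormal{Dual.Enc}_{pk}(\mu_0;(\*s_0,\*e_0))$ are $\*A\*s_0+\*e_0'$, so \Invert\ (Theorem~\ref{thm:trapdoor}) recovers $\*s_0$ and $\*e_0'$ provided $\lVert\*e_0'\rVert<q/(C_T\sqrt{n\log q})$; the condition \eqref{eq:betaquantumcapable} guarantees $\lVert\*e_0\rVert\le B\sqrt{m+1}<q/(4\sqrt{m+1})$, which (using $m=\Omega(n\log q)$) meets the \Invert\ bound for a suitable choice of constants, after which the secret key recovers $\mu_0$ and the last error coordinate.

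The main work is condition 3(a), the smudging estimate. The map $\phi(\mu,\*s,\*e)=\*A'\*s+\*e+(0,\ldots,0,\mu\frac{q}{2})$ is a group homomorphism into $\mZ_q^{m+1}$ (here $\mu\frac{q}{2}$ is read mod $q$, so $\mu$ lives in $\mZ_2$), and $\hat c=\phi(s,\*s_{\hat c},\*e_{\hat c})$. Hence shifting by $\oplus_H\hat c$ is the pushforward under $\phi$ of the translation of $D$ sending $\mu\mapsto\mu\oplus s$, $\*s\mapsto\*s+\*s_{\hat c}$, $\*e\mapsto\*e+\*e_{\hat c}$; the first two translations leave the uniform marginals invariant, so the distributions in \eqref{eq:freshdist} and \eqref{eq:shiftdist} are the $\phi$-images of $D$ and of the same $D$ with its error marginal replaced by $D_{\mZ_q^{m+1},B}+\*e_{\hat c}$. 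Since the Hellinger distance does not increase under a deterministic map and factorizes over the untouched uniform marginals, it is at most $H^2(D_{\mZ_q^{m+1},B},\,D_{\mZ_q^{m+1},B}+\*e_{\hat c})$. By the ciphertext-form bound of Section~\ref{sec:ciphertextform}, together with the fact that bootstrapping refreshes every ciphertext in $C_{\textrm{DualHE}}$ to the noise level of a depth-$\eta_c$ evaluation, the error of $\hat c$ (a single column) satisfies $\lVert\*e_{\hat c}\rVert\le\sqrt{m+1}\,\beta_{init}(N+1)^{\eta_c}$, so $\sqrt{m+1}\,\lVert\*e_{\hat c}\rVert/B\le(m+1)/(N+1)^{\eta}$. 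Plugging into Lemma~\ref{lem:distributiondistance} gives $H^2\le 1-e^{-2\pi(m+1)/(N+1)^{\eta}}$, and since $\eta$ is logarithmic while $N$ is polynomial, $(N+1)^{\eta}$ is superpolynomial and this bound is negligible in $\lambda$. This verifies condition 3(a) and completes the proof. The only real subtlety beyond this calculation is the noise bookkeeping for $\hat c$ across the bootstrapped computation, so that the single factor $(N+1)^{\eta}$ of slack between $B$ and $\lVert\*e_{\hat c}\rVert$ simultaneously buys negligible Hellinger distance and leaves room under \eqref{eq:betaquantumcapable} for \Invert\ to succeed.
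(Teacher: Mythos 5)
Your proposal is correct and matches the paper's own proof essentially step for step: the same instantiation of AltHE as Dual, the same distribution $D$ with Gaussian width $\beta_f=\beta_{init}(N+1)^{\eta+\eta_c}$, the same reduction of condition 3(a) to Lemma \ref{lem:distributiondistance} via the ciphertext-form noise bound $\lVert\*e'\rVert\leq\beta_{init}(N+1)^{\eta_c}\sqrt{m+1}$, and the same combination of the decryption bound $q/(4(m+1))$ with the \Invert\ bound using $m=\Omega(n\log q)$ for condition 3(c). Your explicit pushforward/data-processing argument and the spelled-out bound $H^2\leq 1-e^{-2\pi(m+1)/(N+1)^{\eta}}$ only make explicit what the paper leaves implicit, so no gap remains.
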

Observe that the only change in parameters involved in making DualHE quantum-capable is increasing the circuit depth by an additive logarithmic factor; this does not change the underlying computational assumption of the hardness of learning with errors with a superpolynomial noise ratio.  

\begin{proof}
To prove Theorem \ref{thm:dualhequantumcapable}, we begin by noting that DualHE is leveled fully homomorphic by Theorem \ref{thm:dualhehomomorphic}. We now show that DualHE is quantum-capable, by listing the requirements for quantum capability and proving that each holds. The scheme corresponding to AltHE will be Dual (Section \ref{sec:dualscheme}). Recall the definition of $C_{\textrm{DualHE}}$ from Definition \ref{def:ciphertextduringcomp}. For all ciphertexts $c\in C_{\textrm{DualHE}}$: 
\begin{enumerate}
\item \textit{There exists an algorithm DualHE.Convert$_{pk}$ which on input $c$ produces an encryption $\hat{c}$ under Dual, where both $c$ and $\hat{c}$ encrypt the same value. }

See Section \ref{sec:encryptionconversion}. 
\item \textit{Dual allows the XOR operation to be performed homomorphically. Moreover, the homomorphic XOR operation is efficiently invertible using only the public key of Dual: given the public key of Dual, a ciphertext $c_0$, and the ciphertext which results from applying the homomorphic XOR to input ciphertexts $c_0$ and $c_1$, it is possible to efficiently recover $c_1$.}

See Section \ref{sec:dualscheme}.
\item \textit{There exists a distribution $D$ which satisfies the following conditions:}
\begin{enumerate}

\item \textit{The Hellinger distance between the following two distributions is negligible in $\lambda$: 
\begin{equation}\label{eq:qcreq0}
\{\textnormal{Dual.Enc}_{pk}(\mu;r)|(\mu,r)\xleftarrow{\text{\textdollar}}  D\}
\end{equation}
and
\begin{equation}\label{eq:qcreq1}
\{\textnormal{Dual.Enc}_{pk}(\mu;r) \oplus_H \hat{c}|(\mu,r) \xleftarrow{\text{\textdollar}} D\}
\end{equation}
where $\oplus_H$ represents the homomorphic XOR operation. }

Let
\begin{equation}\label{eq:defbetaf}
    \beta_f = \beta_{init}(N+1)^{\eta_c+\eta}
\end{equation}
The distribution $D$ will sample $\mu, \*s$ uniformly at random and will sample $\*e$ from the discrete Gaussian distribution $D_{\mZ_q^{m + 1},\beta_{f}}$. Assume that $\hat{c} = \textnormal{DualHE.Convert}(c) = \*A'\*s' + \*e' + (0,\ldots,0,s\cdot \frac{q}{2})$ where $\lVert \*e' \rVert\leq \beta_{init}(N+1)^{\eta_c}\sqrt{m + 1}$. We can assume this since we know the format of the ciphertext throughout the computation (see Section \ref{sec:ciphertextform}). The two distributions corresponding to \eqref{eq:qcreq0} and \eqref{eq:qcreq1} are as follows:
\begin{equation}\label{eq:freshdistproof}
\{\*A'\*s + \*e + (0,\ldots,0,\mu\cdot \frac{q}{2})|(\mu,\*s,\*e)\xleftarrow{\text{\textdollar}}  D\}
\end{equation}
and
\begin{equation}\label{eq:shiftdistproof}
\{\*A'(\*s+\*s') + \*e + \*e' + (0,\ldots,0,(\mu\oplus s)\cdot \frac{q}{2})|(\mu,\*s,\*e) \xleftarrow{\text{\textdollar}} D\}
\end{equation}
The Hellinger distance between the distributions in \eqref{eq:freshdistproof} and \eqref{eq:shiftdistproof} is equal to the distance between the following two distributions: 
\begin{equation}\label{eq:gaussiandist}
\{\*e|\*e\xleftarrow{\text{\textdollar}}  D_{\mZ_q^{m + 1},\beta_{f}}\}
\end{equation}
and
\begin{equation}\label{eq:gaussiandistshift}
\{\*e + \*e'|\*e \xleftarrow{\text{\textdollar}} D_{\mZ_q^{m + 1},\beta_{f}}\}
\end{equation}
Since $\lVert \*e' \rVert\leq \beta_{init}(N+1)^{\eta_c}\sqrt{m + 1}$ and $\frac{\beta_f}{\beta_{init}(N+1)^{\eta_c}}$ is equal to the superpolynomial function $(N+1)^{\eta}$, Lemma \ref{lem:distributiondistance} shows that the distance between the two distributions is negligible. 

\item \textit{It is possible for a \BQP\ server to create the following superposition:} 
\begin{equation}
\sum_{\mu\in\{0,1\},r} \sqrt{D(\mu,r)}\ket{\mu,r}
\end{equation}

In this case, $r = (\*s,\*e)$. $D$ samples $\mu$ and $\*s$ according to the uniform distribution and samples $\*e$ according  to the discrete Gaussian distribution $D_{\mZ_q^{m + 1},\beta_{f}}$. It is easy for a \BQP\ server to create a superposition over a discrete Gaussian (see Lemma 3.12 in \cite{regev2005}\footnote{Taken from \cite{oneproverrandomness} - specifically, the state can be created using a technique by Grover and Rudolph (\cite{distributionsuperpositions}), who show that in order to create such a state, it suffices to have the ability to efficiently compute the sum $\sum\limits_{x=c}^d D_{\mZ_q,B_P}(x)$  for any $c,d\in\{-\lfloor\sqrt{B_P}\rfloor,\ldots,\lceil\sqrt{B_P}\rceil\}\subseteq \mZ_q$  and to within good precision. This can be done using standard techniques used in sampling from the normal distribution.}). 

\item \textit{Given $y = \textnormal{AltHE.Enc}_{pk}(\mu_0;r_0)$ where $(\mu_0,r_0)$ is sampled from $D$, it must be possible to compute $\mu_0, r_0$ given the secret key and possibly additional trapdoor information (which can be computed as part of the key generation procedure).}

We first show that $\mu_0,r_0$ can be recovered from $y$. Assume $y$ has error $\*e\in\mZ_q^{m+1}$. Since $\*e$ is sampled from $D_{\mZ_q^{m + 1},\beta_{f}}$, $\lVert \*e \rVert \leq \sqrt{m+1}\beta_f$. Therefore, it is possible to compute $\mu_0$ as long as $\beta_f < \frac{q}{4(m+1)}$ (see Section \ref{sec:dualscheme}). Second, it is possible to recover the randomness $r_0$ of $y$ as long as the lattice trapdoor is applicable. As stated in Theorem \ref{thm:trapdoor}, the lattice trapdoor is applicable if $\beta_f < \frac{q}{C_T\sqrt{n(m+1)\log q}}$. Combining these two conditions, we require that:
\begin{equation}
    \beta_f < \min(\frac{q}{C_T\sqrt{n(m+1)\log q}},\frac{q}{4(m+1)}) = \frac{q}{4(m+1)}
\end{equation}
The equality follows since $m = \Omega(n\log q)$. Given the definition of $\beta_f$ in \eqref{eq:defbetaf}, this condition is satisfied by \eqref{eq:betaquantumcapable}.

\end{enumerate}
\end{enumerate}
\end{proof}

\section{Extension to Quantum Leveled Fully Homomorphic Encryption}\label{sec:extensiontohomomorphic}
We now present the construction of a quantum leveled fully homomorphic encryption scheme from a quantum-capable classical leveled fully homomorphic encryption scheme, as described in Section \ref{sec:extensiontohomomorphicoverview}. We first provide a full description of the scheme (Section \ref{sec:schemedescription}) and then proceed to proving that it is a quantum leveled FHE scheme. Correctness of evaluation follows almost immediately from the correctness of the \cnot\ (see Claim \ref{cl:sumcorrectness}), while CPA security follows along the lines described in Section \ref{sec:extensiontohomomorphicoverview}. 

Assume there are $L$ levels of the quantum circuit to be computed, where each level consists of Clifford gates, followed by a layer of non intersecting Toffoli gates. Note that this circuit arrangement increases the depth of the original circuit by a factor of at most 2. Let the depth of the classical circuit corresponding to each level of this circuit be $L_c$ (this includes decrypting and recovering randomness from ciphertexts corresponding to the \cnot s from the previous level, performing the Pauli key updates corresponding to the \cnot s from the previous level, and performing the Pauli key updates corresponding to the Clifford and Toffoli gates of the current level). See Section \ref{sec:extensiontohomomorphicoverview} for a reminder of the above description. 

The scheme is quite straightforward: the server initially receives a quantum standard basis state encrypted under Pauli keys (which can be thought of as a one time padded classical string), along with the Pauli keys encrypted under a quantum-capable classical homomorphic encryption scheme (which we call HE) and a string of evaluation keys (one for each level). Each evaluation key consists of the evaluation key of HE, encrypted secret key/ trapdoor information and a fresh public key for each level. Recall that the evaluation key is of this form since we need to use a fresh public/ secret key pair for each \cnot; this allows encryption of the secret key/ trapdoor of each level under a new, independent public key (see Section \ref{sec:extensiontohomomorphicoverview}). The server then applies Toffoli and Clifford gates (which compose a universal gate set) as described in Section \ref{sec:overviewhomomorphic}. Finally, the decryption consists of the client first decrypting the encryptions of the Pauli keys, and then using the Pauli keys to decrypt the final measurement result sent by the server.

\begin{scheme}\label{sec:schemedescription}{\textbf{Quantum Leveled Fully Homomorphic Encryption}}
Let HE be a classical leveled fully homomorphic encryption scheme which is quantum-capable for depth $L_c$. 
\begin{itemize}
\item QHE.KeyGen($1^{\lambda}, 1^L$):
\begin{enumerate}
\item For $1\leq i\leq L + 1$, let $(pk_i,evk_i,sk_i, t_{sk_i}) = $HE.Keygen($1^{\lambda}, 1^{L_c}$), where $t_{sk_i}$ is the trapdoor information required for randomness recovery from ciphertexts. 
\item The public key $pk$ is $pk_1$ and the secret key $sk$ is $sk_{L+1}$. The evaluation key $evk$ consists of $(evk_1,\ldots,evk_{L+1})$ as well as $(pk_{i+1},$HE.Enc$_{pk_{i+1}}(sk_{i})$, HE.Enc$_{pk_{i+1}}(t_{sk_i})$) for $1\leq i\leq L$. 
\end{enumerate}
\item QHE.Enc$_{pk}(m)$: For a message $m\in\{0,1\}^\lambda$, the encryption is ($Z^zX^x\ket{m},\textnormal{HE.Enc}_{pk_1}(z,x))$\footnote{Observe that this encryption can immediately be extended to quantum states by replacing $m$ with a $\lambda$ qubit state $\ket{\psi}$. The decryption can also be extended in the same manner.}, where $z,x\in\{0,1\}^{\lambda}$ are chosen at random. Note that $Z^zX^x\ket{m}$ can be represented as the classical string $x\oplus m$. 
\item QHE.Dec$_{sk}$: The input is a classical message $m\in\{0,1\}^{\lambda}$ and encryptions of $z,x\in\{0,1\}^{\lambda}$ under $pk_{L+1}$. The encryptions are first decrypted using $sk_{L+1}$ to obtain $z,x$. The decrypted message is $Z^zX^x\ket{m}$, which can be represented as $x\oplus m$. 
\item QHE.Eval: Clifford gates and Toffoli gates are applied to an encrypted state as follows:
 \begin{enumerate}
\item To apply a Clifford gate, the Clifford is applied to the Pauli one time padded input state and the encrypted Pauli keys are homomorphically updated according to which Clifford gate was applied.
\item To apply a Toffoli gate:
\begin{enumerate}
\item The Toffoli gate is applied to the Pauli one time padded state. Assume the Toffoli is applied on top of the Pauli one time pad $Z^zX^x\in\mbP_3$. 
\item The Pauli key encryptions are homomorphically updated  according to $P_{zx}$.
\item Three \cnot s are used to correct $C_{zx}$ (see Section \ref{sec:toffoliapp} for details on $C_{zx}$ and $P_{zx}$). As part of each operation, the Pauli key encryptions are homomorphically updated (see Claim \ref{cl:leveledsumcorrectness} for a full description of how this is done). 
\end{enumerate}
\end{enumerate}
\end{itemize} 
\end{scheme}

\subsection{CPA Security}\label{sec:quantcpa}
In this section, we prove the following theorem:
\begin{thm}\label{thm:quantcpa}
The scheme presented in Section \ref{sec:schemedescription} is IND-CPA secure. 
\end{thm}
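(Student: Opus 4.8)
The plan is a standard sequence-of-games (hybrid) argument that peels off the encrypted secret-key/trapdoor chain one level at a time, starting from the top level $pk_{L+1}$ and working down to $pk_1$, exactly as sketched in \Sec{sec:extensiontohomomorphicoverview}. Throughout, I rely on the fact that the underlying scheme HE is itself IND-CPA secure against quantum adversaries (\Def{def:cpasecurity}; for the concrete instantiation this is \Th{thm:dualhehomomorphic}), and that a routine bit-by-bit hybrid upgrades single-bit CPA security to the statement that any fixed encryption $\textrm{HE.Enc}_{pk_j}(v)$ can be replaced by $\textrm{HE.Enc}_{pk_j}(0)$ without detection, provided the matching secret key $sk_j$ appears nowhere in the adversary's view.

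First I set up the hybrids. Let $H_0$ be the real IND-CPA game, in which the quantum adversary receives $pk = pk_1$, the full evaluation key (the keys $evk_1,\ldots,evk_{L+1}$, the public keys $pk_2,\ldots,pk_{L+1}$, and the ciphertexts $\textrm{HE.Enc}_{pk_{i+1}}(sk_i)$ and $\textrm{HE.Enc}_{pk_{i+1}}(t_{sk_i})$ for $1\le i\le L$), together with a challenge ciphertext $(Z^zX^x\ket{m},\,\textrm{HE.Enc}_{pk_1}(z,x))$ for the challenge message $m$. For $1\le k\le L$, I define $H_k$ to be identical to $H_{k-1}$ except that the two ciphertexts encrypted under $pk_{L+2-k}$, namely $\textrm{HE.Enc}_{pk_{L+2-k}}(sk_{L+1-k})$ and $\textrm{HE.Enc}_{pk_{L+2-k}}(t_{sk_{L+1-k}})$, are replaced by encryptions of $0$. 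Finally, $H_{L+1}$ replaces the challenge Pauli-key encryption $\textrm{HE.Enc}_{pk_1}(z,x)$ by $\textrm{HE.Enc}_{pk_1}(0)$.

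The heart of the argument is the indistinguishability of consecutive hybrids, which I would establish by reduction to the CPA security of HE under a single public key. To bound the advantage gap between $H_{k-1}$ and $H_k$, the reduction receives $pk_{L+2-k}$, $evk_{L+2-k}$ and its encryption oracle, generates all remaining key tuples itself, and assembles the adversary's view: the two ciphertexts under $pk_{L+2-k}$ come from the challenge oracle (real values versus $0$), the ciphertexts under higher-indexed keys are already encryptions of $0$ as in $H_{k-1}$, and everything else is produced honestly. The crucial invariant is that the reduction never needs $sk_{L+2-k}$: this key is either the master secret $sk_{L+1}$ (held only by the client and never entering the adversary's view) or it is some $sk_{L+2-k}$ with $k\ge 2$, whose only occurrence was inside $\textrm{HE.Enc}_{pk_{L+3-k}}(\cdot)$, an encryption already switched to $0$ in the previous step. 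The same reduction, now under $pk_1$, bounds the gap between $H_L$ and $H_{L+1}$, using that after $H_L$ the key $sk_1$ has been removed from the view. Verifying this top-down invariant — that each secret key is genuinely hidden at the moment its public key is zeroed out — is the main thing one must check, and it is precisely what forces the ordering of the hybrids.

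To finish, I observe that the view in $H_{L+1}$ is information-theoretically independent of the message $m$: every encrypted payload in the evaluation key is now an encryption of $0$, the challenge's classical part $\textrm{HE.Enc}_{pk_1}(z,x)$ has been replaced by $\textrm{HE.Enc}_{pk_1}(0)$, so $z,x$ are uniform and independent of everything the adversary holds, and hence the Pauli one-time-padded state $Z^zX^x\ket{m}$ is perfectly hiding by Pauli mixing (\Le{paulimix}); for a standard-basis message this is simply the classical one-time pad $\ket{m\oplus x}$. Consequently the distinguishing advantage in $H_{L+1}$ is exactly $0$. Since $L$ is polynomial in $\lambda$ and each consecutive pair of hybrids differs by a negligible amount (in the sense of \Def{def:compind}, the adversary and reduction being quantum and the challenge carrying a quantum register), a triangle inequality over the $L+2$ hybrids bounds the total advantage in $H_0$ by a negligible function, which is the claim.
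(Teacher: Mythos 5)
Your proposal is correct and follows essentially the same route as the paper's own proof: a top-down hybrid argument replacing $\textrm{HE.Enc}_{pk_{i+1}}(sk_i)$ and $\textrm{HE.Enc}_{pk_{i+1}}(t_{sk_i})$ by encryptions of $0$ (justified at each step by CPA security of HE under a key whose secret has already been purged from the view), followed by replacing the encrypted Pauli keys with encryptions of $0$ and invoking Lemma \ref{paulimix}. Your write-up merely reindexes the hybrids and makes the final challenge-switching step and the never-needs-$sk_{L+2-k}$ invariant more explicit than the paper does.
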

\begin{proof}
To prove CPA security as defined in Definition \ref{def:cpasecurity}, we show that for any polynomial time adversary $\mathcal{A}$, there exists a negligible function $\mu(\cdot)$ such that 
\begin{equation}\label{eq:quantcpa}
\mathrm{Adv}_{\mathrm{CPA}}[\mathcal{A}] =  |\Pr[\mathcal{A}(pk,evk,\mathrm{QHE.Enc}_{pk}(0)) = 1] - \Pr[\mathcal{A}(pk,evk,\mathrm{QHE.Enc}_{pk}(1)) = 1]|=\mu(\lambda)
\end{equation}
where $(pk,evk,sk)\leftarrow$ \textnormal{QHE.Keygen}($1^{\lambda}$). 

The only difficulty in proving \eqref{eq:quantcpa} is that encryptions of $sk_1$ and $t_{sk_1}$ are also given to the attacker as part of the evaluation key; we need to prove that this information can be replaced with encryptions of 0. This can be done via standard techniques in proving security of leveled homomorphic encryption schemes (see Section 4.1 in \cite{homomorphic}). We include the proof for completeness.

We proceed through $L$ hybrids. In the final hybrid, the attacker is given only the public key $pk_1$ and the evaluation key $evk'$, which consists of $(evk_1,pk_2,evk_2,\ldots, pk_{L+1},evk_{L+1})$ and 2 encryptions of 0 under $pk_{i+1}$ (i.e. HE.Enc$_{pk_{i+1}}(0)$) for $1\leq i\leq L$. CPA security at this point follows immediately (by replacing the encryptions of $z,x$ with 0 and then using Lemma \ref{paulimix}). The hybrids are as follows:
\begin{description}
\item Hyb$_{L+1}$: The evaluation key is as described in Section \ref{sec:schemedescription}.
\item For $1\leq i\leq L$, where $i$ is decreasing:
\begin{description}
\item Hyb$_i$: The evaluation key is the same as in Hyb$_{i+1}$, except HE.Enc$_{pk_{i+1}}(t_{sk_{i}})$ and HE.Enc$_{pk_{i+1}}(sk_{i})$ are replaced with encryptions of 0. 
\end{description}
\end{description}
Note that in Hyb$_i$, the evaluation key does not contain secret key or trapdoor information corresponding to public keys $pk_i,\ldots,pk_{L+1}$. More specifically, the evaluation key in Hyb$_1$ is $evk'$ (all the encryptions of secret keys and trapdoors have been replaced by encryptions of 0). 

First, Hyb$_{L+1}$ is computationally indistinguishable from Hyb$_L$ due to the CPA security of HE under $pk_{L+1}$ (note that encryptions of $sk_{L+1}$ and $t_{sk_{L+1}}$ were not provided as part of the evaluation key). For all $1\leq i\leq L-1$, Hyb$_{i+1}$ is indistinguishable from Hyb$_{i}$ due to the CPA security of HE under $pk_{i+1}$. This is because Hyb$_{i+1}$ has no secret key or trapdoor information corresponding to $pk_{i+1}$. 

It follows that there exists a negligible function $\mu_C$ such that the CPA security of QHE 
\begin{eqnarray}
|\Pr[\mathcal{A}(pk,evk,\mathrm{QHE.Enc}_{pk}(0)) = 1] - \Pr[\mathcal{A}(pk,evk,\mathrm{QHE.Enc}_{pk}(1)) = 1]|
\end{eqnarray}
can be upper bounded as follows:
\begin{eqnarray}
\cdots &\leq& \mu_C L + |\Pr[\mathcal{A}(pk,evk',\mathrm{QHE.Enc}_{pk}(0)) = 1] - \Pr[\mathcal{A}(pk,evk',\mathrm{QHE.Enc}_{pk}(1)) = 1]|\nonumber\\
&\leq& \mu_C (L+1)
\end{eqnarray}
where $(pk,evk,sk)\leftarrow$ \textnormal{QHE.Keygen}($1^{\lambda}$) and $evk' = (evk_1,pk_2,evk_2,\ldots, pk_{L+1},evk_{L+1})$. 

\end{proof}

\subsection{Quantum Leveled FHE}
In this section, we prove the following theorem:
\begin{thm}\label{thm:qhehomomorphic}
The scheme QHE presented in Section \ref{sec:schemedescription} is a quantum leveled fully homomorphic encryption scheme.   
\end{thm}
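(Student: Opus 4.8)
The plan is to verify that QHE meets every requirement of \Def{def:compact} for a quantum leveled fully homomorphic encryption scheme: IND-CPA security, correctness of decryption after homomorphic evaluation, compactness, and the leveled property (evaluation of arbitrary depth-$L$ quantum circuits). Security is already in hand, since \Th{thm:quantcpa} establishes that QHE is IND-CPA secure; so the remaining work concerns only correctness and the structural compactness/leveled conditions. Compactness and the leveled property are quick to dispatch: the output of QHE.Eval is a $\lambda$-qubit Pauli one time padded state (representable as a classical string when the final output is a standard basis state) together with the HE-encryptions of the two $\lambda$-bit Pauli keys, whose size is controlled by the compactness of HE and is independent of the evaluated circuit; and since QHE.KeyGen produces $L+1$ independent key tuples, the scheme handles any depth-$L$ quantum circuit after the depth-$\le 2L$ rearrangement into Clifford layers followed by non-intersecting Toffoli layers noted at the start of \Sec{sec:extensiontohomomorphic}.

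The substance is correctness, which I would prove by induction on the gates $V_1,\ldots,V_k$ of the rearranged circuit, maintaining the invariant that after applying the first $k$ gates the joint state of server and encrypted keys is within negligible trace distance of $(Z^zX^x V_k\cdots V_1\ket{\psi},\textrm{HE.Enc}(z,x))$ for uniformly random Pauli keys $z,x$ encrypted under the public key of the current level. The base case is QHE.Enc itself. For the inductive step I would split into gate types. Pauli and Clifford gates are exact and error-free: they are handled by a parallel quantum operation on the one time padded state together with a homomorphic update of the encrypted Pauli keys, using the fact (\Sec{sec:PauliNClifford}) that the Clifford group normalizes the Pauli group by conjugation.

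The delicate case is the Toffoli gate. Here I would invoke the decomposition $TZ^zX^xT^\dagger = C_{zx}P_{zx}$ from \Sec{sec:toffoliapp}: after applying $T$ to the one time padded state and homomorphically updating the keys by $P_{zx}$, there remains the Clifford correction $C_{zx}$, which is a product of two key-independent Hadamards and three gates $\textrm{CNOT}^{b_{zx}}$ whose control bits $b_{zx}$ are functions of the encrypted Pauli keys. Each such $\textrm{CNOT}^{b_{zx}}$ is removed, up to a fresh Pauli operator, by an \cnot, whose correctness together with the accompanying homomorphic Pauli-key update is supplied by the leveled form of \Cl{cl:sumcorrectness} (i.e. \Cl{cl:leveledsumcorrectness}), valid because HE is quantum capable for depth $L_c$. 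This is exactly where the leveled key structure is used: the measurement outcomes $\hat c, y, d$ of each \cnot\ at level $i$ are encrypted by the server under the fresh key $pk_{i+1}$, and the encrypted secret key/trapdoor material for $pk_i$ provided in the evaluation key lets the server homomorphically recover the randomness and compute the Pauli update under $pk_{i+1}$.

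To close the induction I would bound the accumulated error. Each \cnot\ contributes only negligible trace distance by \Cl{cl:sumcorrectness}, and since the circuit contains polynomially many gates, the triangle inequality together with contractivity of trace distance under the CPTP evaluation maps (\Sec{sec:prelimtrace}) bounds the total deviation by a negligible quantity. Hence the state held before decryption is within negligible trace distance of $(Z^zX^x\,U\ket{\psi},\textrm{HE.Enc}_{pk_{L+1}}(z,x))$ for the target unitary $U$, and QHE.Dec, which decrypts the Pauli keys with $sk_{L+1}$ and then removes the one time pad, outputs $U\ket{\psi}$ up to negligible error. The main obstacle I anticipate is not any single hard estimate but the careful bookkeeping in the Toffoli case: tracking precisely which homomorphic subcomputations are performed under which public key across levels, and confirming that the per-gate negligible errors genuinely accumulate to a negligible total.
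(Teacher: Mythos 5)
Your proposal is correct and follows essentially the same route as the paper's (much terser) proof: IND-CPA security from \Th{thm:quantcpa}, compactness read off from Scheme \ref{sec:schemedescription}, and correctness by reducing each Toffoli's Clifford correction $C_{zx}$ to three encrypted CNOT operations handled by \Cl{cl:leveledsumcorrectness}, with the polynomially many per-gate negligible trace-distance errors summed via the triangle inequality. Your gate-by-gate induction and explicit tracking of which level's public key encrypts each subcomputation is simply a more detailed write-up of the same argument.
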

Combining Theorem \ref{thm:qhehomomorphic} with Theorem \ref{thm:dualhequantumcapable} provides the main result of this paper (stated informally in Theorem \ref{thm:mainresult}). To prove Theorem \ref{thm:qhehomomorphic}, we need to prove that QHE can evaluate depth $L$ quantum circuits. This is taken care of by the following claim:
\begin{claim}\label{cl:leveledsumcorrectness}
Assume the underlying classical encryption scheme HE of QHE is quantum-capable for depth $L_c$. Then a \BQP\ machine with access to a ciphertext $c$ encrypting $s$ under $pk_i$, a quantum state $Z^zX^x\ket{\psi}$ on two qubits, ciphertexts encrypting $z,x$ under $pk_i$, and the evaluation key of QHE can compute the encryptions of $z',x'\in\{0,1\}^2$ under $pk_{i+1}$ as well as a state within negligible trace distance of the following ideal state
\begin{equation}
\textrm{CNOT }_{1,2}^sZ^{z'}X^{x'}\ket{\psi}\bra{\psi}(Z^{z'}X^{x'})^\dagger(\textrm{CNOT }_{1,2}^s)^\dagger
\end{equation}
\end{claim}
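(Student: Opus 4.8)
The plan is to read off the quantum state from \Cl{cl:sumcorrectness} and then carry out the classical Pauli-key bookkeeping demanded by the leveled structure. First I would apply the \cnot\ of \Cl{cl:sumcorrectness}, treating the Pauli one-time padded input $Z^zX^x\ket{\psi}$ as the two-qubit state there. Using the ciphertext $c$ under $pk_i$ together with the public key $pk_i$ (the quantum part needs only the public key), the server produces a classical ciphertext $y=\textrm{AltHE.Enc}_{pk_i}(\mu_0,r_0)$, a string $d$, and a state within negligible trace distance of
\[
(Z^{e}\otimes X^{\mu_0})\,\textrm{CNOT}_{1,2}^s\,Z^zX^x\ket{\psi},\qquad e=d\cdot((\mu_0,r_0)\oplus(\mu_1,r_1)),
\]
with $\mu_0,r_0,\mu_1,r_1$ as in \eqref{eq:defmuone}.

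The next step is pure Pauli propagation. Since a $Z$ on the control and an $X$ on the target are both fixed points of conjugation by $\textrm{CNOT}_{1,2}$, the correction $Z^{e}\otimes X^{\mu_0}$ commutes with $\textrm{CNOT}_{1,2}^s$ for either value of $s$. Hence the state above equals $\textrm{CNOT}_{1,2}^s\,(Z^{e}\otimes X^{\mu_0})Z^zX^x\ket{\psi}$, and collecting the two Pauli layers (the global phase is irrelevant for the density matrix) gives $\textrm{CNOT}_{1,2}^s\,Z^{z'}X^{x'}\ket{\psi}$ with $z'=(z_1\oplus e,\,z_2)$ and $x'=(x_1,\,x_2\oplus\mu_0)$. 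This is exactly the ideal state in the statement.

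It remains to produce encryptions of $z',x'$ under $pk_{i+1}$, which I would do by performing the entire update homomorphically under $pk_{i+1}$ using the level-$i$ secret data supplied in encrypted form in the evaluation key, namely $\textrm{HE.Enc}_{pk_{i+1}}(sk_i)$ and $\textrm{HE.Enc}_{pk_{i+1}}(t_{sk_i})$. The server encrypts under $pk_{i+1}$ the classical strings $y$ and $d$ and $\hat c=\textrm{HE.Convert}(c)$; he also re-encrypts the $pk_i$-ciphertexts of $z,x$ under $pk_{i+1}$ and homomorphically decrypts them with $\textrm{HE.Enc}_{pk_{i+1}}(sk_i)$ to obtain $pk_{i+1}$-encryptions of $z,x$. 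By quantum-capability requirement 3(c) he homomorphically runs the randomness-recovery procedure on $y$ (using the encrypted $sk_i,t_{sk_i}$) to get encryptions of $\mu_0,r_0$, and from $\hat c$ he recovers $s$ together with the randomness of $\hat c$; these, with $\mu_0,r_0$ and the natural form of $\oplus_H$ (requirement 2), determine $\mu_1,r_1$. With encrypted copies of $z,x,d,\mu_0,r_0,\mu_1,r_1$ available, computing $\textrm{HE.Enc}_{pk_{i+1}}(z')$ and $\textrm{HE.Enc}_{pk_{i+1}}(x')$ from the two formulas above is a shallow Boolean circuit.

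The one genuinely new ingredient beyond \Cl{cl:sumcorrectness} is this classical post-processing, so that is where the work lies. Well-definedness is exactly the content of requirements 2 and 3(c), and these apply to $c$ because $c$ arises as a control bit of a depth-$\le L_c$ computation on the Pauli keys, so $c\in C_{\textrm{HE}}$ and HE is quantum capable for it. The remaining point — which I expect to be the main obstacle to get right — is the depth bookkeeping: the full update (recryption of $z,x$, homomorphic decryption and randomness recovery from $y$, recovery of $s,\mu_1,r_1$ via $\oplus_H$, and the final linear combination) must fit within the depth $L_c$ for which HE is assumed quantum capable, since that is what keeps every homomorphic evaluation correct. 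Once the depth fits, correctness of each homomorphic step is inherited from the correctness of $\textrm{HE.Eval}$, and the negligible trace-distance bound carries over unchanged from \Cl{cl:sumcorrectness} because the classical steps act only on ciphertexts and measurement outcomes, never on the quantum register.
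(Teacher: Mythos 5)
Your proposal is correct and follows essentially the same route as the paper's proof: invoke \Cl{cl:sumcorrectness} on $Z^zX^x\ket{\psi}$, commute the correction $Z^{e}\otimes X^{\mu_0}$ through $\textrm{CNOT}_{1,2}^s$ to get the key updates $z'=z\oplus(e,0)$, $x'=x\oplus(0,\mu_0)$, and then perform exactly the paper's homomorphic post-processing under $pk_{i+1}$ (encrypting $\hat{c},y,d$ and the $pk_i$-ciphertexts of $z,x$, homomorphically decrypting with $\textrm{HE.Enc}_{pk_{i+1}}(sk_i)$, and recovering $(\mu_0,r_0),(\mu_1,r_1)$ via $t_{sk_i}$ and the invertible $\oplus_H$). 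You are merely more explicit than the paper on the Pauli-propagation algebra and on how $\mu_1,r_1$ are derived, and your depth-$L_c$ concern is resolved by the scheme's setup, which defines $L_c$ to include precisely this update circuit.
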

\begin{proof}
Let $c_{z,x,pk_i}$ be the concatenation of four ciphertexts, each encrypting a single bit of $z,x$ under $pk_i$. The server applies the following operations:
\begin{enumerate}
\item As described in Section \ref{sec:descofhiddensum}, the server applies the encrypted CNOT operation to the two-qubit state $Z^zX^x\ket{\psi}$ using the ciphertext $\hat{c} = $HE.Convert$(c)$. According to Claim \ref{cl:sumcorrectness}, the server will obtain a ciphertext 

$y = $AltHE.Enc$_{pk}(\mu_0,r_0)$, a string $d\in\{0,1\}^m$ and a state within negligible trace distance of the following ideal state:
\begin{equation}
(Z^{d\cdot ((\mu_0,r_0)\oplus (\mu_1,r_1))}\otimes X^{\mu_0})\textrm{CNOT}_{1,2}^s\ket{\psi}
\end{equation}
where $\textnormal{AltHE.Enc}_{pk}(\mu_0;r_0) = \textnormal{AltHE.Enc}_{pk}(\mu_1;r_1) \oplus_H \hat{c}$ and $\oplus_H$ is the homomorphic XOR operation.

\item The server uses $pk_{i+1}$ to compute HE.Enc$_{pk_{i+1}}(c_{z,x,pk_i})$ and HE.Enc$_{pk_{i+1}}(\hat{c},y,d)$. 
\item The server computes the encryption of $z,x$ under $pk_{i+1}$ by homomorphically running the decryption circuit on inputs $\mathrm{HE.Enc}_{pk_{i+1}}(sk_i)$ and HE.Enc$_{pk_{i+1}}(c_{z,x,pk_i})$ .
\item The server homomorphically computes $(\mu_0,r_0)$ and $(\mu_1,r_1)$, using the ciphertexts encrypting $t_{sk_i},sk_i,\hat{c},y,d$ (all encrypted with HE under public key $pk_{i+1}$). The server then uses this result, along with the ciphertexts encrypting $z,x,d$, to homomorphically compute $z' = z + (d\cdot ((\mu_0,r_0)\oplus (\mu_1,r_1)),0)$ and $x' = x + (0,\mu_0)$. The result of this computation is the encryption of $z',x'$ with HE under $pk_{i+1}$. 
\end{enumerate}
\end{proof}

Claim \ref{cl:leveledsumcorrectness} describes how to perform a single encrypted CNOT operation. However, as stated in 2(c) of Scheme \ref{sec:schemedescription}, three such operations (as described by the operator $C_{zx}$ in Section \ref{sec:toffoliapp}) are required for each Toffoli gate. Since these three operators commute, they can be applied in parallel: Step 1 in Claim \ref{cl:leveledsumcorrectness} can be applied for all three operators, and then Steps 2 - 4 in Claim \ref{cl:leveledsumcorrectness} can be applied for all three operators, thereby implying that only one key switch (from $pk_i$ to $pk_{i+1}$) is needed to apply all three encrypted CNOT operations.

Theorem \ref{thm:qhehomomorphic} follows from Theorem \ref{thm:quantcpa} and Claim \ref{cl:leveledsumcorrectness}:

\begin{proofof}{ \textbf{Theorem \ref{thm:qhehomomorphic}}}
Theorem \ref{thm:quantcpa} shows QHE is IND-CPA secure. From the description of the scheme (Scheme \ref{sec:schemedescription}), it is clear that QHE is compact. Since the number of Toffoli gates is polynomial in $\lambda$, \Cl{cl:leveledsumcorrectness} (along with the triangle inequality) implies that the trace distance of the server's final state from the ideal (correct) final state is at most negligible in $\lambda$. 

\end{proofof}

\section{Acknowledgments}
Thanks to Dorit Aharonov, Zvika Brakerski, Sanjam Garg, Stacey Jeffery, Zeph Landau, Umesh Vazirani and Thomas Vidick for many useful discussions.

\bibliographystyle{alpha}
\bibliography{qpip}

\end{document}